\newtheorem{definition}{Definition}[section]
\newtheorem{proposition}[definition]{Proposition}
\newtheorem{theorem}[definition]{Theorem}
\def\checkmark{\tikz\fill[scale=0.4](0,.35) -- (.25,0) -- (1,.7) -- (.25,.15) -- cycle;} 
\newtheorem{example}[definition]{Example}
\newtheorem{remark}[definition]{Remark}
\newcommand{\RR}{\mathbb{R}}
\newcommand{\BB}{\mathbb{B}}
\DeclareMathOperator*{\argmax}{arg\,max}
\DeclareMathOperator*{\argmin}{arg\,min}
\newcommand{\Img}{\operatorname{Im}}
\newcommand{\fix}{\operatorname{fix}}
\newcommand{\nonfix}{\operatorname{non-fix}}
\newcommand{\maj}{\operatorname{maj}}
\newcommand{\imax}[1]{{#1}\textup{-}\!\max}
\newcommand{\iargmax}[1]{{#1}\textup{-}\!\argmax}
\newcommand{\power}[1]{{\mathcal P}(#1)}
\newcommand{\ignore}[1]{}
\title{A Higher-order Framework for \\ Decision Problems and Games\footnote{We thank seminar participants at the University of Mannheim, the Dagstuhl Perspectives Workshop ``Categorical Methods at the Crossroads'', the Computing in Economics and Finance Conference 2014 in Oslo, and the ``Cogrow'' Workshop in Nijmegen 2014 for helpful comments.}}
\author{Jules Hedges, Paulo Oliva\\School of Electronic Engineering and Computer Science\\Queen Mary University London\vspace{0.5cm}\\
Evguenia Winschel, Viktor Winschel, Philipp Zahn\\Department of Economics\\University of Mannheim}
\date{\today}
\begin{document}
\maketitle
\begin{abstract}
We introduce a new unified framework for modelling both decision problems and finite games based on \emph{quantifiers} and \emph{selection functions}. We show that the canonical utility maximisation is one special case of a quantifier and that our more abstract framework provides several additional degrees of freedom in modelling. In particular, incomplete preferences, non-maximising heuristics, and context-dependent motives can be taken into account when describing an agent's goal. We introduce a suitable generalisation of Nash equilibrium for games in terms of quantifiers and selection functions. Moreover, we introduce a refinement of Nash that captures context-dependency of goals. Modelling in our framework is compositional as the parts of the game are modular and can be easily exchanged. We provide an extended example where we illustrate concepts and highlight the benefits of our alternative modelling approach.

\end{abstract}
\textbf{JEL codes:} C0, D01, D03, D63, D64\\
\textbf{Keywords:}  behavioural economics, foundations of game theory, decision theory, beauty contest, higher order functions, quantifiers, selection functions

\onehalfspacing
%---------------------------------------
\section{Introduction}
%---------------------------------------

\ignore{
\begin{framed}
To be discussed:
1. We introduce quantifiers as goals but in the games part, we only use selection functions. 
2.  We should make clearer what are the elements of classical game theory in selection function framework; in particular strategies, payoff functions, why we do not need certain elements or not ..
3. Should we include sequential games
4. We should either discuss mixed strategies in more details or limit the discussion to pure strategies.
\end{framed}

\begin{framed}
\textbf{Jules}
\begin{enumerate}
\item This is tricky and worth thinking about a lot. We know that only selection functions work in game theory (since you get the wrong answer using quantifiers) but quantifiers are also sometimes interesting. Most of the time my opinion (and Paulo's I think) is that non-attained quantifiers are bad, and since every attained quantifier is better described as a selection function you might as well use selection functions.
\item I agree, I think the `qualitative vs quantitative' split is one interesting way to put it
\item I don't think we really have any results about sequential games, but there is plenty of existing stuff to cite
\item We still don't really understand mixed strategies! As of last week we have at least 1 working example, but I don't understand the general theory too much. But I still think the whole `compiling games' thing is really interesting, so I think it would be ok to keep it like this. But maybe we should be more clear we are only talking about pure strategies except in 1 section
\end{enumerate}
\end{framed}

\begin{framed}
\textbf{Philipp}
\begin{enumerate}
\item ad 3: Ok. But we should definitely say something about sequential games; maybe in the conclusion?
\item  ad 4: Yes, we should make it more explicit. We may also mention in the conclusion that this is on a to do list and will be addressed in the future.
\item ad 4: I have put the compiling paragraph in an extra section.
\item We should think about a title. I made a change just to get a discussion started. The current version is definitely not the final one.
\end{enumerate}
\end{framed}

\begin{framed}
\textbf{Paulo, 4/8/14}
Good points.
\begin{enumerate}
\item As Jules said, only attainable quantifiers (those that have a selection function) make game theoretic sense. But we should maybe make it clear that one quantifier can be "implemented" by several selection functions. I like the "intensional" versus "extensional" analogy of computer science. For instance, sorting a list is an extensional description of the goal. Each particular sorting algorithm is an intensional description of the procedure to achieve the goal. I think it's the same with selection functions and quantifiers. Going back to the wine, the customer can say, I would like a dry wine (quantifiers) and the waiter might suggest the pinot grigio (selection function).
\item I particularly like the fact that we get rid of "players"! If the move at two different points in the game is described by the same quantifier these can be viewed as being made by the same player. But it might as well be a coallition, we don't care. So, in some sense we keep the "utility function" of classical game theory, but abstract players via quantifiers and selection functions. The gains are modularity, compositionality and flexibility.
\item I think we should definitely explain sequential games and cite the different papers for more detail. I can have a go at this.
\item I can't remember how much of mixed strategies we have included already. I will read the whole thing again to refresh my mind... Talk to you all tomorrow.
\end{enumerate}
\end{framed}

\begin{framed}
\textbf{Philipp, 12/8/14}
\begin{enumerate}
\item Global vs local introduction.
\item Should we extend example 4.2? Provide a more detailed analysis of the equilibrium strategy? 
\item Example 4.3 looks misplaced. Before we talk about context-dependent Nash vs Nash. Maybe we should extend this example, put it after the definitions, and exemplify the two types of equilibria? Here being equivalent? 

\item p. 22: Should we also provide an example of a context-independent quantifier/ selection function to show that selection eq and Nash eq. do not coincide?Should we comment more on that whether Nash and selection Nash diverge also depends on the unilateral context f and thereby indirectly also on the kind of outcome functions, i.e. rules of the game?
\item Should we shorten the discussion in the compiling section where we discuss generalities of the modelling process itself?
\end{enumerate}
\end{framed}
}

In this paper we introduce a new formal framework to reason about decision problems and finite games. This framework has been developed in computer science as a game theoretical approach to proof theory\footnote{Proof theory is a branch of mathematical logic which investigates the structure and meaning of formal mathematical proofs. It has been recently discovered that certain proofs of high logical complexity can be interpreted as computer programs which compute equilibria of suitable generalised games.} and is based on higher order functions \cite{escardo10a,escardo_sequential_2011}. We adopt and develop it further to make it accessible to economics. 
% A two-page appendix giving the necessary background about higher-order functions is included. 

The core concept is the modelling of the agent's goal as a \emph{quantifier}, i.e. a higher-order function of type $(X \to R)\to R$, where $X$ is the set of choices and $R$ is the set of possible outcomes. A corresponding notion is that of a \emph{selection function}, i.e. a higher-order function of type $(X \to R) \to X$ which calculates a choice that ``meets" the desired goal. In Section \ref{sec:quantifiers} we have a brief introduction to higher-order functions, and give the precise definitions of quantifiers and selection functions. We will also illustrate how these provide a powerful and flexible way to model agents. 

% In our framework we describe an agent's goal as a \emph{quantifier} and his action to achieve his goal as a \emph{selection function}. Both, quantifiers and selection functions are examples of \emph{higher order functions} which are functions that take other functions as input. 

In economic theory, agents are typically assumed to maximise an utility function. This is a special case of a quantifier and a particularly structured outcome space, $\max \colon (X \to \RR) \to \RR$. The corresponding selection function is $\argmax \colon (X \to \RR) \to X$.  Our aim with this paper is to show that representing decision problems and games in the more abstract form of quantifiers and selection functions offers three advantages over the special cases of $\max$ and $\argmax$.

%The standard approach in economic theory is to consider the quantifier $\max \colon (X \to R) \to R$ where $R$ is normally a totally ordered set such as the reals (and the corresponding selection function $\argmax \colon (X \to R) \to X$). 
%
%In fact, the canonical $max$ and $argmax$ operators that are pervasive in economic theory are examples of higher order functions. More specifically, the $\max$ operator is a quantifier. It is a function of type $(X \to R) \to R$, i.e. it takes a (usually continuous) function $f \colon X \to R$ and returns the maximum value attained by $f$, where typically $R=\RR$. Correspondingly, the $argmax$ operator is a selection function, i.e a function of type $(X \to R) \to R$. It calculates a point where a maximum value is attained given a function as input. 
%

%The aim of this paper is to show that representing decision problems and games via arbitrary quantifiers and selection functions (and not just $\max$ and $\argmax$) offers four advantages over the standard approach based on utility functions and normal form representations. 

First, selection functions and quantifiers are more \emph{expressive} as they provide additional degrees of freedom in modelling behaviour. Different quantifiers other than the max operator are possible, for instance, decision heuristics where agents do not fully optimise. The outcome space can have any structure and is not restricted to the structure imposed by utility functions (or, equivalently, rational preferences). In particular, one can directly model preference relations that are incomplete. Also, since quantifiers and selection functions take functions as input, context-dependent goals (where not only outcomes matter but also how outcomes come about) can be seamlessly modelled. In Section \ref{sec:decisions} we explain how these generalisations can be done step by step and give various examples. 

As a second advantage, quantifiers and selection functions model not only decisions but also interactions and thereby \emph{unify decision problems and games} in one formal framework. In Section \ref{sec:games} we formally introduce games and a generalised notion of Nash equilibrium in terms of quantifiers. As it is possible to define an equilibrium in terms of selection functions, we also introduce what we call \emph{selection Nash} equilibrium. We prove that Nash equilibrium and selection equilibrium are isomorphic in the case of classical operators max and argmax. Moreover, we prove that, generally, this isomorphism does not hold true: For other quantifiers and selection functions the two different concepts yield different sets of equilibria. In fact, the selection equilibrium is a refinement of the generalised Nash equilibrium. 

The third advantage of our approach is \emph{compositionality} in modelling. We think of a game as consisting of a \emph{global} outcome function determining outcomes for each given play of the game, and \emph{local} quantifiers or selection functions describing each of the particular players' intentions. Selection functions and quantifiers are modular as, for instance, if in a particular game, one would like to consider the consequences of changing the preferable outcomes for one player, only this player's quantifier or selection function need to change. This is of particular advantage in the case of context-dependent goals. With utility functions it may be necessary to redefine the outcome space or to manipulate payoffs by hand such that a context-dependency is encoded. In Section \ref{sec:compiling} we discuss at length the relationship between games based on classical payoff functions and based on selection function as well as how to automatically compile a selection function game into a classical one. 

All these concepts are illustrated in Section \ref{sec:examples} where we provide an extended example. We consider several variants of a beauty contest. While the rules of the contest are fixed, the goals of agents change from variant to variant to illustrate the compositionality of the selection function approach. We begin with considering agents who care about the outcome of the beauty contest. Next, we introduce some agents who only care about choosing the winner of the contest, i.e. the votes of the other players. We show how this concern can be nicely modelled as a context-dependent fixpoint quantifier: The agent wants his vote to be the same as the majority decision of the overall game. Beyond this particular example, we discuss how fixpoint operations in general capture coordination goals of agents at a higher level, highlighting again the expressivity of selection functions. As a last point, this example also teaches the intuition about the selection equilibrium, and how it refines the generalised Nash, as it takes into account goals that depend on the context under which they are attained. 

 %In particular, a game can consist of agents that have completely different goals -- something which is not possible to model within the utility based framework without having to refactor the (global) utility function.%

A notable feature of our approach, which we do not explore in detail in this paper but still consider to be important, is that it is directly implementable in functional programming languages such as Haskell. Hence, large and complex games can be programmed and computer assistance in analysing these games is readily available. This is discussed in Section \ref{sec:conclusion}.

\ignore{
The paper is organised as follows. In the first part of this paper, we introduce the theory of \emph{quantifiers} and \emph{selection functions}  as the foundation of our framework. In the second part, we represent decision problems in terms of quantifiers and selection functions followed by  simultaneous move games in the third part of the paper.
In the fourth part of the paper, we illustrate our approach by discussing an extended example based on Keynesian beauty contest . We exemplify that our approach allows to model goals in a very expressive way. For instance, an agent who wants to vote for the winner of a contest but who is not intrinsically interested in the contestants per se can be modelled as a fixpoint agent, an agent whose sole goal it is to vote such that his vote is in line with the majority. More generally, fixpoint agents capture a high-level representation for the goal of coordination in certain economic situations - such as the beauty contest. 
Moreover, we also illustrate the framework's flexibility by looking at variants of the beauty contest where we combine completely different kinds of agents. All this can be done within the same framework.
In our examples, we also explore the concept of the selection equilibrium which perfectly captures the intuitively plausible equilibria in games 
where agents take into account the actions of other agents in a context-dependent way. 
}

%---------------------------------------
\section{Quantifiers and Selection Functions}
\label{sec:quantifiers}
%---------------------------------------

A \emph{higher order function} (or \emph{functional}) is a function whose domain is itself a set of functions. Given sets $X$ and $Y$ we denote by $X \to Y$ the set of all functions with domain $X$ and codomain $Y$. A higher order function is therefore a function $f : (X \to Y) \to Z$ where $X$, $Y$ and $Z$ are sets.

A simple example of a higher order function is the function that evalulates its argument at a constant point. To give a specific example, we take the sets $\mathbb R$ (real numbers) and $\mathbb Z$ (integers), and pick a constant real number, such as $\pi$. We can then define a function $\Phi : (\mathbb R \to \mathbb Z) \to \mathbb Z$ by the equation $\Phi (f) = f (\pi)$. We can illustrate the behaviour of $\Phi$ by giving it a specific function $f : \mathbb R \to \mathbb Z$ as an input. For example, let $f$ be the function that takes a real number to its integer lower bound. The integer lower bound to $\pi$ is 3, therefore $\Phi (f) = 3$. 
% We will see that $\Phi$ is a simple example of a particular kind of higher order function that we call a \emph{quantifier}.

% We will sometimes use a particular notation for higher order functions called $\lambda$\emph{-notation}. The example above would be written in $\lambda$-notation as $\Phi = \lambda f . f (\pi)$. The symbol $\lambda$ denotes the \emph{abstraction} or \emph{binding} of the input to the function, and the name of the variable that is bound is not important, i.e. $\lambda f . f (\pi)$ and $\lambda g . g (\pi)$ represent the ``same" function. 

There are familiar examples of higher-order functions, for example in the maximum operator the name of the variable which ranges over the set $X$ is not relevant, i.e. 
\[ \max_{x \in X} f(x) = \max_{y \in X} f(y) \]
and similarly with the variable over which one performs integration, i.e.
\[ \int f(x)\,\mathrm{d}x = \int f(y)\,\mathrm{d}y \]
%(we will see later that it is not a coincidence: both of these examples turn out to be special cases of $\lambda$ expressions). 
%
\ignore{We sometimes write the domain of the function as a superscript, as in $\Phi = \lambda f^{\mathbb R \to \mathbb Z} . f (\pi)$. $\lambda$-notation is also sometimes useful for writing functions which are not higher-order, for example the function which inputs an integer and squares it can be written $\lambda n^\mathbb Z . n^2$. It is also possible for the bound variable to not appear in the body of the $\lambda$ expression, giving a constant function such as $\lambda x . 5$. 
% More importantly we will use this when combining lambdas: an important example of a $\lambda$-expression is $\mathbf K = \lambda x . \lambda y . x$, which we will write for short as $\lambda x, y . x$. On a particular input $x_0$, this function produces the constant function with value $x_0$.
%
An important result, called \emph{combinatory completeness}, is that the functions that can be represented only using $\lambda$ expressions and nothing else are precisely the computable functions\footnote{One must distinguish at this point the \emph{typed} from the \emph{untyped} $\lambda$-calculus. For this paper we will be mostly working with the typed version, where every term has a precise type, and an application $f(x)$ is only allowed when $f$ has type $X \to Y$ and $x$ has type $X$, so that an application such as $f(f)$ is not allowed. However, in order to obtain combinatorial completeness one needs to work with the untyped $\lambda$-calculus, so as to obtain fixed point operators such as $\Phi(f) = (\lambda x . f ( x (x) )) (\lambda x . f ( x (x) ))$.}. That is, $\lambda$ expressions are a programming language, equivalent in power to Turing machines and other standard models of computation. However $\lambda$ expressions have an advantage that it is easy to extend them with noncomputable functions when necessary (for example, a $\lambda$ expression can refer to a known noncomputable function), and in practice we will do this sometimes. For example, there is a noncomputable function $E$ which inputs a description of a finite game, and outputs a mixed strategy Nash equilibrium of that game. This function can be freely combined with $\lambda$ expressions, for example $\lambda x, y . E(x)$. However, if we have a function $f$ which is defined entirely in terms of $\lambda$ expressions and other functions known to be computable, then we know that $f$ is computable. This has two important advantages: firstly, we can easily identify which objects we are discussing are computable, and secondly, for those which are computable, we can see the $\lambda$ expression as an implementation in a proto-programming language which can easily be converted into code of a suitable real programming language such as Haskell.
}
In this section we define two particular classes of higher-order functions: \emph{quantifiers} and \emph{selection functions}. Subsequently (Section \ref{sec:decisions}) we represent  the classical approach to decision theory via preference relations and $\max$ and $\argmax$ operators within the new formalism. We will also explain how to arrive at the notion of a quantifier via a series of generalisations from utility functions.

%---------------------------------------
\subsection {Quantifiers}
\label{sec:quantifier_preference_util}
%---------------------------------------

Suppose we have an agent $\mathcal A$. We can place $\mathcal A$ into any economic \emph{situation} or \emph{context} and observe his motivations and his choices. We know that he is \emph{deterministic} (or \emph{predictable}) in the sense that his moves are not dependent on chance.\footnote{This is without loss of generality, because we can always allow the set of outcomes to be a set of probability distributions.} 

By a `situation' or `context' we mean an object that encodes all of the relevant information the agent could consider when choosing a move or strategy. Assume our agent is choosing a move in the set $X$, and the set of possible final outcomes is $R$. The context will normally include other agents and all the other choices that together with the choice of our agent $\mathcal A$ will determine a final outcome. If all we care about is the final outcome, then our context can be modelled simply by a function $p \colon X \to R$ that maps each of the agent's moves to a specific outcome. In other words, to give the context of an agent is the same as to define precisely what final outcomes will result after each of the agent's choices. That is all that our agent needs to know about this ``context" in order to make the good choice. 

\begin{definition}[Agent context] For an agent $\mathcal A$ choosing a move from a set $X$, having in sight a final outcome in a set $R$, we call any function $p : X \to R$ a possible \emph{context} for the agent $\mathcal A$. 
\end{definition}

Suppose that $\mathcal A$ makes a decision in the context $p$. Then the agent will consider some outcomes to be \emph{good} (or \emph{acceptable}), and other outcomes to be bad. We are going to allow the set of outcomes that the agent considers good to be totally arbitrary. 

\begin{definition}[Quantifier, \cite{escardo10a,escardo_sequential_2011}] Mappings $$\varphi : (X \to R) \to \power{R}$$ from contexts $p : X \to R$ to sets of outcomes $\phi(p) \subseteq R$ are called \emph{quantifiers}\footnote{The terminology comes from the observation that the usual existential $\exists$ and universal $\forall$ quantifiers of logic can be seen as operations of type $(X \to \BB) \to \BB$, where $\BB$ is the type of booleans. Mostowski \cite{Mostowski(1957)} also called arbitrary functionals of type $(X \to \BB) \to \BB$ \emph{generalised quantifiers}. We are choosing to generalise this further by replacing the booleans $\BB$ with an arbitrary type $R$, and allowing for the operation to be multi-valued.}. If $\varphi(p) \neq \varnothing$ for all $p \colon X \to R$ we say that the quantifier $\varphi$ is \emph{total}.
\end{definition}

We be will modelling agents $\mathcal A$ as quantifiers $\varphi_{\mathcal A}$, and in such cases we wish to think of $\varphi(p)$ as the set of outcomes the agent $\mathcal A$ considers preferable in each context $p \colon X \to R$. Our main objective in this paper is to convince the reader that this is a general, modular, and highly flexible way of describing an agent's goal or objective.

The classical example of a quantifier is utility maximisation. In this case the set of outcomes is $R = \mathbb R^n$, where the $i$th element represents the utility of the $i$-th player. Given a context $p : X \to \mathbb R^n$, the good outcomes for the $i$th player are precisely those for which the $i$th coordinate, i.e. his utility function, is maximal. This quantifier is given by
\[ \imax{i}(p)= \{ r \in \Img(p)\mid r_i \geq p(x')_i \text{ for all } x' \in X \} \]
where $\Img(p)$ denotes the image of the function $p \colon X \to R$. Note that the image mapping $\Img(\cdot)$ is itself a quantifier, although not a particularly interesting one, as it corresponds to the agent that considers any possible attainable outcome to be good.

\begin{definition}[Context-independent quantifiers] A quantifier $\varphi \colon (X \to R) \to \power{R}$ is said to be \emph{context-independent} if the value $\varphi(p)$ only dependents on $\Img(p)$, i.e.
\[ \Img(p) = \Img(p') \implies \varphi(p) = \varphi(p'). \]
Hence, a quantifier $\varphi$ will be called \emph{context-dependent} if for some contexts $p$ and $p'$, with $\Img(p) = \Img(p')$, the sets of preferred outcomes $\varphi(p)$ and  $\varphi(p')$ are different.
\end{definition} 

Intuitively, a context-dependent quantifier will select good outcomes not just based on which outcomes are possible, but will also take into account how the outcomes are actually achieved. It is easy to see that the quantifier $\imax{i}(p)$ is context-independent, since it can be written as a function of $\Img (p)$ only. An example of a context-dependent quantifier is the fixpoint operator. Recall that a fixpoint of a function $f : X \to X$ is a point $x \in X$ satisfying $f(x) = x$. When the set of moves is equal to the set of outcomes there is a quantifier whose good outcomes are precisely the fixpoints of the context. If the context has no fixpoint and the agent will be equally happy (or equally unhappy) with any outcome, then the quantifier is given by
\[ \fix : (X \to X) \to \mathcal P (X) \]
\[ \fix (p)= \begin{cases}
\{ x \in X \mid p(x) = x \} &\text{ if nonempty } \\
X &\text{ otherwise}
\end{cases} \]
Clearly $\fix(\cdot)$ is context-dependent, since we could have two maps $p, p' \colon X \to X$ having the same image set $\Img(p) = \Img(p')$ but with $p$ and $p'$ having different sets of fixed points. 

We will discuss several examples of fixpoint quantifiers at great length in Section \ref{sec:examples}.
%---------------------------------------
\subsection{Selection Functions}
%---------------------------------------

Just as a quantifier tells us which outcomes an agent considers good in each given context, one can also consider the higher-order function that determines which \emph{moves} an agent considers good in any given context. 

\begin{definition}[Selection functions] A \emph{selection function} is any function of the form
\[ \varepsilon : (X \to R) \to \mathcal P (X) \]
If $\varepsilon(p) \neq \varnothing$ for all $p \colon X \to R$ we say that the selection function is \emph{total}. 
\end{definition}

We will mainly consider total selection functions and quantifiers, because the agent must always have some preferred outcomes and moves.

In the computer science literature where selection functions have been considered previously \cite{escardo10a,escardo_sequential_2011} the focus was on single-valued ones. However, as multi-valued selection functions are extremely important in our examples we have adapted the definitions accordingly.

Similarly to quantifiers, the canonical example of a selection function is maximising one coordinate in $\mathbb R^n$, defined by
\[ \iargmax{i}(p)= \{ x \in X \mid p(x)_i \geq p(x')_i \text{ for all } x' \in X \} \]
Even in one-dimensional $\mathbb R^1$ the $\argmax$ selection function is naturally multi-valued: a function may attain its maximum value at several different points.

There is an important relation between quantifiers and selection functions called attainment. Intuitively this means that the outcome of a good move should be a good outcome. 

\begin{definition} Given a quantifier $\varphi : (X \to R) \to \mathcal P (R)$ and a total selection function $\varepsilon : (X \to R) \to \mathcal P (X)$, we say that $\varepsilon$ \emph{attains} $\varphi$ iff for all contexts $p : X \to R$ it is the case that
\[ x \in \varepsilon (p) \implies p (x) \in \varphi (p)\]
Clearly, if $\varphi$ is attainable then it is also total.
\end{definition}

The attainability relation holds between the quantifier $\max_i$ and the selection function $\argmax_i$. The fixpoint quantifier is also itself a selection function, and it attains itself since
\[ x \in \fix (p)\implies p (x) \in \fix(p)\]

Given a selection function $\varepsilon$, we can form the `smallest' quantifier which it attains as follows.

\begin{definition} Given a selection function $\varepsilon \colon (X \to R) \to \power{X}$, define the quantifier $\overline{\varepsilon} \colon (X \to R) \to \power{R}$ as
\[ \overline{\varepsilon} (p)= \{ p (x) \mid x \in \varepsilon (p)\}. \]
\end{definition}

We can use the mapping between selection functions and quantifiers in order to transfer back properties of quantifiers to selection functions. For instance, we can then say that a selection function $\varepsilon \colon (X \to R) \to \power{X}$ is context-independent if its associated quantifier $\overline{\varepsilon} \colon (X \to R) \to \power{R}$ is context-independent. Here are the three main properties of the map $\varepsilon \mapsto \overline{\varepsilon}$.

\begin{proposition} Given any selection function $\varepsilon \colon (X \to R) \to \power{X}$ the following three properties are easy to check:
\begin{itemize}
    \item[(i)] {\bf Totality}. If $\varepsilon$ is total then so is the quantifier $\overline{\varepsilon}$.
    \item[(ii)] {\bf Attainability}. If $\varepsilon$ is total then $\varepsilon$ attains $\overline{\varepsilon}$.
    \item[(iii)] {\bf Minimality}. $\overline{\varepsilon}(p) \subseteq \varphi(p)$, for any quantifier $\varphi$ attainable by $\varepsilon$.
\end{itemize}
\end{proposition}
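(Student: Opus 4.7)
The plan is to verify each of the three claims directly from the definitions; as the statement itself flags, none of them require any substantial argument, so the proposal amounts to unpacking the definition of $\overline{\varepsilon}(p) = \{p(x) \mid x \in \varepsilon(p)\}$ and matching it against the definitions of totality and attainment.

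For (i), I would fix an arbitrary context $p \colon X \to R$ and use totality of $\varepsilon$ to pick some $x \in \varepsilon(p)$. Then $p(x)$ is, by definition, an element of $\overline{\varepsilon}(p)$, so $\overline{\varepsilon}(p) \neq \varnothing$, proving totality of the quantifier.

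For (ii), I would note that the hypothesis of totality of $\varepsilon$ is needed only because the definition of ``attains'' was formulated for total selection functions. The substantive content, namely $x \in \varepsilon(p) \implies p(x) \in \overline{\varepsilon}(p)$, is immediate from the definition of $\overline{\varepsilon}(p)$ as the image of $\varepsilon(p)$ under $p$.

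For (iii), I would fix $p$ and take any $r \in \overline{\varepsilon}(p)$, which by definition has the form $r = p(x)$ for some $x \in \varepsilon(p)$. Since $\varphi$ is attained by $\varepsilon$, attainment gives $p(x) \in \varphi(p)$, hence $r \in \varphi(p)$, establishing $\overline{\varepsilon}(p) \subseteq \varphi(p)$.

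There is no real obstacle; the only mild subtlety is that in (ii) the statement bundles in totality simply so that the term ``attains'' is well-typed. The role of the proposition is essentially a sanity check that $\overline{\varepsilon}$ deserves its description as the smallest quantifier attained by~$\varepsilon$, and the proof proposal is to spell out exactly that.
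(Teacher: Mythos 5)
Your proposal is correct and matches the paper's intent exactly: the paper offers no written proof (it declares the three properties ``easy to check''), and your direct unpacking of $\overline{\varepsilon}(p) = \{p(x) \mid x \in \varepsilon(p)\}$ against the definitions of totality and attainment is precisely the verification being left to the reader. Your remark that the totality hypothesis in (ii) is only there to make the term ``attains'' well-typed is also consistent with the paper's definition of attainment, which is stated for total selection functions.
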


Therefore, the good outcomes according to the quantifier $\overline{\varepsilon} \colon (X \to R) \to \power{R}$ are exactly the outcomes resulting from good moves according to the selection function $\varepsilon \colon (X \to R) \to \power{X}$.

Conversely, for a quantifier $\varphi \colon (X \to R) \to \power{R}$ we can define a corresponding selection function as follows. 

\begin{definition} Given a quantifier $\varphi \colon (X \to R) \to \power{R}$, define the selection function $\overline{\varphi} \colon (X \to R) \to \power{X}$ as
\[ \overline{\varphi}(p) = \{ x \mid p(x) \in \varphi(p) \}. \]
\end{definition}

We use the same ``overline" notation, as it will be clear from the context whether we are applying it to a quantifier or a selection function.

\begin{proposition} Given any quantifier $\varphi \colon (X \to R) \to \power{R}$ the following three properties are easy to check:
\begin{itemize}
    \item[(i)] {\bf Totality}. If $\varphi$ is attainable then the selection function $\overline{\varphi}$ is total.
    \item[(ii)] {\bf Attainability}. If $\varphi$ is attainable then $\overline{\varphi}$ attains $\varphi$.
    \item[(iii)] {\bf Maximality}. $\varepsilon(p) \subseteq \overline{\varphi}(p)$, for any selection function $\varepsilon$ attaining $\varphi$.
\end{itemize}
\end{proposition}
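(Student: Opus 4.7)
The plan is to verify all three clauses by direct unfolding of the defining equation $\overline{\varphi}(p) = \{ x \mid p(x) \in \varphi(p) \}$ together with the definition of attainability $x \in \varepsilon(p) \implies p(x) \in \varphi(p)$; no nontrivial constructions are needed. I will dispatch clause (ii) first because clauses (i) and (iii) both depend on the same membership reformulation that (ii) makes explicit.

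For (ii), I would simply observe that $x \in \overline{\varphi}(p)$ holds, by definition, exactly when $p(x) \in \varphi(p)$, which is precisely the attainment condition. Note that this part does not require the hypothesis that $\varphi$ is attainable; it only becomes meaningful (i.e.\ guaranteed to be nontrivial) once we know $\overline{\varphi}$ is total, which is what clause (i) provides.

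For (i), I would unpack ``$\varphi$ is attainable'' to obtain some total selection function $\varepsilon$ with $x \in \varepsilon(p) \implies p(x) \in \varphi(p)$. Fix an arbitrary context $p \colon X \to R$. Since $\varepsilon$ is total, pick any $x_0 \in \varepsilon(p)$; by attainment, $p(x_0) \in \varphi(p)$, hence $x_0 \in \overline{\varphi}(p)$, so $\overline{\varphi}(p) \neq \varnothing$.

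For (iii), let $\varepsilon$ be any selection function attaining $\varphi$, and fix $p$. For any $x \in \varepsilon(p)$ the attainment condition gives $p(x) \in \varphi(p)$, which is exactly the membership criterion for $\overline{\varphi}(p)$; therefore $\varepsilon(p) \subseteq \overline{\varphi}(p)$. The main (minor) obstacle is simply to keep the two overline conventions straight: here $\overline{\varphi}$ is a selection function built from a quantifier, whereas the dual construction in the preceding proposition produced a quantifier from a selection function. Once this is kept in mind, all three properties are one-line consequences of the definitions, justifying the authors' remark that they are easy to check.
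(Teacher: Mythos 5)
Your proof is correct and is exactly the direct unfolding of the definitions that the authors had in mind when they left the proposition as ``easy to check'' (the paper supplies no explicit proof). All three clauses, including the observation that attainability of $\varphi$ is needed only to guarantee that $\overline{\varphi}$ is total and hence eligible to attain anything, are handled as intended.
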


Let us briefly reflect on the game-theoretic meaning of attainability, and the translations between quantifiers and selection functions. Suppose we have a quantifier $\varphi$ which describes the outcomes that an agent considers to be good. The quantifier might be \emph{unrealistic} in the sense that it has no attainable good outcome. For example, an agent may consider it a good outcome if he received a million dollars, but in his current context there may just not be a move available which will lead to this outcome. Given a context $p$, the set of attainable outcomes is precisely the image of $p$. A \emph{realistic} quantifier is simply a quantifier in which every context with a good outcome has an attainable good outcome. We can write it in symbols as
\[ \varphi (p) \neq \varnothing \implies \varphi (p) \cap \Img (p) \neq \varnothing. \]

\begin{proposition} For a total quantifier $\varphi$ the following are equivalent
    \begin{itemize}
    \item $\varphi$ is realistic,
    \item $\varphi$ is attainable,
    \item $\overline{\varphi}$ is total.
    \end{itemize}
\end{proposition}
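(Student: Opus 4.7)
The plan is to establish the three equivalences via a cycle of implications, each of which unpacks the relevant definitions directly. Totality of $\varphi$ will only be needed in one step, where it lets us upgrade "realistic" (a conditional statement) into an unconditional non-emptiness.

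First I would prove that \emph{realistic} implies $\overline{\varphi}$ \emph{total}. Fix any context $p \colon X \to R$. Since $\varphi$ is total, $\varphi(p) \neq \varnothing$, so by realism $\varphi(p) \cap \Img(p) \neq \varnothing$. Picking any $r$ in that intersection gives some $x \in X$ with $p(x) = r \in \varphi(p)$, and by definition of $\overline{\varphi}$ we get $x \in \overline{\varphi}(p)$.

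Next I would show that $\overline{\varphi}$ \emph{total} implies $\varphi$ \emph{attainable}. The natural candidate witness is $\overline{\varphi}$ itself: by hypothesis it is total, and attainment is immediate from the definition of $\overline{\varphi}$, since $x \in \overline{\varphi}(p)$ literally means $p(x) \in \varphi(p)$.

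Finally, I would close the cycle by showing \emph{attainable} implies \emph{realistic}. Suppose $\varepsilon$ is a total selection function attaining $\varphi$, and suppose $\varphi(p) \neq \varnothing$. Pick any $x \in \varepsilon(p)$, which exists by totality of $\varepsilon$. Attainment gives $p(x) \in \varphi(p)$, and of course $p(x) \in \Img(p)$, so $\varphi(p) \cap \Img(p) \neq \varnothing$.

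There is no real obstacle here; everything follows by unfolding the definitions. The only subtle point is making sure totality of $\varphi$ is used in the correct place (to guarantee that the hypothesis $\varphi(p) \neq \varnothing$ in the definition of realistic is actually vacuous for every $p$, so that realism forces $\overline{\varphi}(p) \neq \varnothing$ for every $p$, not merely for those $p$ with $\varphi(p) \neq \varnothing$).
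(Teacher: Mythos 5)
Your proof is correct: the cycle realistic $\Rightarrow$ $\overline{\varphi}$ total $\Rightarrow$ attainable $\Rightarrow$ realistic checks out at every step, with $\overline{\varphi}$ itself serving as the witnessing total selection function for attainability, and you use the totality of $\varphi$ exactly where it is needed. The paper states this proposition without proof, treating it as immediate from the definitions, and your direct unfolding is precisely the intended argument.
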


Thus total selection functions are a way to describe total realistic quantifiers. One can consider translating quantifiers into selection function and back into quantifiers, or conversely.

\begin{proposition} For all $p \colon X \to R$ we have $\overline{\overline{\varphi}}(p) = \varphi(p)$ and $\varepsilon(p) \subseteq \overline{\overline{\varepsilon}}(p)$.
\end{proposition}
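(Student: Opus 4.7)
Both statements should fall out by unfolding the two overline constructions in sequence. I would first simply compute what $\overline{\overline{(\cdot)}}$ does in each direction.

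For the first equality, starting from a quantifier $\varphi$, the selection function $\overline{\varphi}$ is by definition $\overline{\varphi}(p) = \{ x \in X \mid p(x) \in \varphi(p)\}$, and then applying the selection-to-quantifier overline gives
\[ \overline{\overline{\varphi}}(p) = \{ p(x) \mid x \in \overline{\varphi}(p)\} = \{ p(x) \mid p(x) \in \varphi(p)\} = \varphi(p) \cap \Img(p). \]
So this reduces the first claim to the statement $\varphi(p) \subseteq \Img(p)$. The inclusion $\overline{\overline{\varphi}}(p) \subseteq \varphi(p)$ is then immediate. For the reverse direction, I would use the hypothesis implicit in the statement that $\varphi$ is attainable (by the preceding proposition, equivalently realistic, equivalently $\overline{\varphi}$ is total): any $r \in \varphi(p)$ must be realised as $r = p(x)$ for some $x \in X$, since otherwise the ``good'' outcome $r$ would not lie in the image and no selection function could attain it. I suspect this is the main subtle point of the proof — the equality as literally stated requires the assumption that $\varphi(p) \subseteq \Img(p)$, which is precisely what attainability buys, and I would state this assumption explicitly at the start.

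For the second inclusion, I would again unfold. Starting from a selection function $\varepsilon$, the quantifier $\overline{\varepsilon}$ is $\overline{\varepsilon}(p) = \{ p(x) \mid x \in \varepsilon(p)\}$, and then
\[ \overline{\overline{\varepsilon}}(p) = \{ y \in X \mid p(y) \in \overline{\varepsilon}(p)\} = \{ y \in X \mid p(y) = p(x) \text{ for some } x \in \varepsilon(p)\}. \]
Now if $x \in \varepsilon(p)$, then trivially $p(x)$ lies in $\overline{\varepsilon}(p)$, so $x$ satisfies the defining condition of $\overline{\overline{\varepsilon}}(p)$, giving $\varepsilon(p) \subseteq \overline{\overline{\varepsilon}}(p)$. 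The reverse inclusion can fail precisely because the round-trip identifies any two moves $x, y$ with $p(x) = p(y)$ — and I would briefly point this out to motivate why the second statement is an inclusion rather than an equality, contrasting the two directions of the Galois-like correspondence $\varepsilon \mapsto \overline{\varepsilon}$ and $\varphi \mapsto \overline{\varphi}$.

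Neither half requires any nontrivial technique — they are purely definitional — so the only real decision in the write-up is whether to make the implicit attainability assumption in the first half explicit, and I would.
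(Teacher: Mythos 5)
Your unfolding of both overlines is the right move, and your reduction of the first equality to the containment $\varphi(p) \subseteq \Img(p)$ is exactly where the content of the statement lies; the second half ($\varepsilon(p) \subseteq \overline{\overline{\varepsilon}}(p)$, with equality failing because the round trip identifies moves with equal outcomes) is correct and purely definitional, as you say. The paper offers no proof of this proposition, so there is nothing to compare routes against; the issue is with your proposed repair of the first half.

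The gap is your claim that attainability forces $\varphi(p) \subseteq \Img(p)$. It does not. Attainment, as defined in the paper, is the one-directional implication $x \in \varepsilon(p) \implies p(x) \in \varphi(p)$: good moves must lead to good outcomes, but nothing prevents $\varphi(p)$ from also containing outcomes that no move achieves. Likewise ``realistic'' only asks that $\varphi(p) \cap \Img(p) \neq \varnothing$, not containment. Concretely, take the indifferent quantifier $\varphi(p) = R$ for all $p$: it is total, realistic, and attained by the total selection function $\varepsilon(p) = X$, yet
\[
\overline{\overline{\varphi}}(p) = \varphi(p) \cap \Img(p) = \Img(p) \subsetneq R
\]
whenever $p$ is not surjective, so the first equality fails for an attainable quantifier. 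The hypothesis actually needed is precisely $\varphi(p) \subseteq \Img(p)$ for all $p$ --- a condition strictly stronger than attainability, which holds automatically for quantifiers of the form $\overline{\varepsilon}$ (and for the paper's running examples $\imax{i}$ and $\fix$, whose values are by construction subsets of $\Img(p)$). You should state that containment as the explicit hypothesis rather than derive it from attainability, since the derivation is where your argument breaks.
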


The proposition above shows that on quantifiers the double-overline operation calculates the same quantifier we started with. However, on selection functions the the mapping $\varepsilon \mapsto \overline{\overline{\varepsilon}}$ can be viewed as a \emph{closure} operator. Intuitively, the new selection function $\overline{\overline{\varepsilon}}$ will have the same good \emph{outcomes} as the original one, but it might consider many more \emph{moves} to be good as well, as it does not distinguish moves which both lead to equally good outcomes. As such, one can think of multi-valued selection function as a finer way to describe agents' motivations. As we will see in Section \ref{sec:games}, selection functions are also crucial to define a useful equilibrium concept for  games with agents who have context-dependent quantifiers, which is finer than previously considered equilibrium concepts.

\begin{remark} The theory of quantifiers and selection functions has been developed in stages. Single-valued selection functions and quantifiers in the general form used here first appeared in \cite{escardo10a}, unifying earlier definitions in proof theory and type theory. That is also where the connection between selection functions and game theory also first established. Multi-valued quantifiers appeared in \cite{escardo_sequential_2011}, which allows us to capture more important examples in a more natural way. The connections between selection functions and game theory were explored in more depth in \cite{escardo12} and \cite{hedges13}, and the latter contains the definition of attainment given here. Finally \cite{hedges14a} contains the terminology \emph{context} and the definition of a realistic quantifier.
\end{remark}

%---------------------------------------
\section{Decisions}
\label{sec:decisions}
%---------------------------------------

In this section we relate the concepts of quantifiers and selection functions to the standard utility approach in decision theory. We show that the choices motivated by utility maximisation are a special case of context-independent quantifiers. Next, we show that the set of context-independent quantifiers contains elements that allow to model decisions with fewer assumptions than utility theory and also in a more expressive way. Lastly, we consider quantifiers that are context-dependent. We provide examples along the way to illustrate the concepts.

Suppose $R$ is the set of possible final outcomes, and each agent $i$ has a partial order relation $\succeq_i$ on $R$, so that $x \succeq_i y$ means that agent $i$ prefers the outcome $x$ to $y$. These partial orders lead to choice functions $f_i : \mathcal P (R) \to \mathcal P (R)$ where $f_i(S)$ are the maximal elements in the set of possible outcomes $S$ with respect to the order $\succeq_i$. Note that these $f_i$ satisfy $f_i (S) \subseteq S$, and $f_i (S) \neq \varnothing$ for non-empty $S$.

Every such $f_i$ can be turned into a quantifier $\varphi_i$ in a generic way, using the fact that the image operator is a higher-order function $\Img : (X \to R) \to \mathcal P (R)$:
\[ (X \to R) \xrightarrow{\Img} \mathcal P (R) \xrightarrow{f_i} \mathcal P (R) \]
so that $f_i \circ \Img \colon (X \to R) \to \mathcal P (R)$ are quantifiers. 

\begin{proposition} Assume $|X| \geq |R|$. Then a quantifier $\phi \colon (X \to R) \to \power{R}$ is context-independent if and only if $\phi = f \circ \Img$, for some choice function $f \colon \power{R} \to \power{R}$.
\end{proposition}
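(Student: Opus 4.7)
The backward direction is immediate from the definitions and does not use the cardinality hypothesis: if $\phi = f \circ \Img$, then any two contexts $p, p'$ with $\Img(p) = \Img(p')$ satisfy $\phi(p) = f(\Img(p)) = f(\Img(p')) = \phi(p')$, so $\phi$ is context-independent.

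For the forward direction, given a context-independent quantifier $\phi$, I would construct $f \colon \power{R} \to \power{R}$ pointwise. For each nonempty $S \subseteq R$, pick some function $p_S \colon X \to R$ with $\Img(p_S) = S$ and set $f(S) := \phi(p_S)$. This value does not depend on the chosen $p_S$ precisely by context-independence of $\phi$, so $f$ is well-defined on nonempty subsets. For $S = \varnothing$, either $X = \varnothing$ (in which case the only context is the empty function and we set $f(\varnothing)$ equal to its $\phi$-value), or $X \neq \varnothing$ and $\varnothing$ is not in the image of $\Img$, so we may set $f(\varnothing)$ arbitrarily. Given the construction, for any $p \colon X \to R$, the functions $p$ and $p_{\Img(p)}$ share the same image, so by context-independence $\phi(p) = \phi(p_{\Img(p)}) = f(\Img(p))$.

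The cardinality hypothesis $|X| \geq |R|$ is used exactly once, to guarantee that for every nonempty $S \subseteq R$ some $p_S$ as above exists. Since $|S| \leq |R| \leq |X|$, one can map a chosen $|S|$-element subset of $X$ bijectively onto $S$ and send the remaining elements of $X$ anywhere in $S$, producing a function $X \to R$ with image exactly $S$. Without this hypothesis the construction would still yield some $f$ satisfying $\phi = f \circ \Img$ (since $f$ could be chosen arbitrarily on subsets that are never realised as images), but $f$ would no longer be determined by $\phi$ on all of $\power{R}$.

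No real obstacle arises; the only point requiring care is separating the edge case $S = \varnothing$, which is handled trivially, from the main case where context-independence supplies exactly the coherence needed for $f$ to be well-defined.
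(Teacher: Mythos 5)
Your proof is correct and follows essentially the same route as the paper's: define $f(S) := \phi(p_S)$ for a chosen context $p_S$ with image $S$ (guaranteed to exist by $|X| \geq |R|$) and use context-independence to verify $\phi(p) = f(\Img(p))$. Your additional care with the case $S = \varnothing$ and your remark that the cardinality hypothesis is only needed to define $f$ on unrealised images are both correct refinements, but the core argument is identical.
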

\begin{proof} If $\phi = f \circ \Img$ then clearly $\phi$ is context-independent. For the other direction, note that since $|X| \geq |R|$ we have for any subset $S \subseteq R$ a map $u_S \colon X \to R$ such that $\Img(u_S) = S$. Hence, assume $\phi$ is context-independent and define $f(S) = \phi(u_S)$. Clearly,
\[ \phi(p) = \phi(u_{\Img(p)}) = f(\Img(p)) \]
where the first step uses that $\phi$ is context-independent and that $\Img(p) = \Img(u_{\Img(p)})$ by the assumption on the family of maps $u_S$.
\end{proof}

Agents who are defined by context-independent quantifiers are choosing the set of good outcomes simply by ranking the set of outcomes that can be achieved in a given context; but are forgetting all the information about how each of the outcomes arise from particular choices of moves. For instance, we might have a set of actions that will lead us to earn some large sums of money. Some of these, however, might be illicit. A maximising agent defined in a context-independent way would choose the outcome that gives himself the maximum return. If we have control over which actions lead to which outcomes, we might consider other choices as preferable.

\begin{proposition} Whenever $f_i$ is a choice function arising from a partial order $\succeq_i$, then the context-independent quantifier $\phi_i = f_i \circ \Img$ is attainable (and hence realistic).
\end{proposition}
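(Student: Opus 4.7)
The plan is to exhibit a total selection function that attains $\phi_i$, and the natural candidate is simply $\overline{\phi_i}$ as defined earlier in the paper. Equivalently, I would define
\[ \varepsilon_i(p) = \{ x \in X \mid p(x) \in f_i(\Img(p)) \}, \]
i.e.\ the preimage under $p$ of the $\succeq_i$-maximal elements of $\Img(p)$. Attainment of $\phi_i$ by $\varepsilon_i$ is then immediate from the definition: if $x \in \varepsilon_i(p)$ then $p(x) \in f_i(\Img(p)) = \phi_i(p)$.

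The only real content is therefore totality of $\varepsilon_i$. I would assume $X$ is nonempty (otherwise there is nothing to choose and the statement is vacuous), so $\Img(p)$ is nonempty for every context $p \colon X \to R$. By the standing property of choice functions arising from a partial order, $f_i(S) \neq \varnothing$ whenever $S \neq \varnothing$, hence $\phi_i(p) = f_i(\Img(p))$ is nonempty. Pick any $r \in \phi_i(p)$; since $\phi_i(p) \subseteq \Img(p)$, there exists $x \in X$ with $p(x) = r$, and this $x$ witnesses $\varepsilon_i(p) \neq \varnothing$.

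A slightly slicker alternative route, which I would mention, is to prove directly that $\phi_i$ is \emph{realistic} and then invoke the earlier proposition stating that for total quantifiers realistic, attainable, and ``$\overline{\phi}$ total'' are equivalent. Realism is trivial here: $\phi_i(p) = f_i(\Img(p)) \subseteq \Img(p)$, so $\phi_i(p) \cap \Img(p) = \phi_i(p)$, and hence whenever $\phi_i(p) \neq \varnothing$ we automatically have $\phi_i(p) \cap \Img(p) \neq \varnothing$. The ``hence realistic'' parenthetical in the statement then follows a fortiori, via the same proposition. There is no substantive obstacle; the proof is essentially a matter of unpacking the definitions of $\phi_i$, $\Img$, and attainability, and using the one non-trivial fact that choice functions derived from partial orders always return a nonempty set on nonempty inputs.
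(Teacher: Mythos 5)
Your proposal is correct and follows essentially the same route as the paper: the paper also defines $\varepsilon_i(p) = \{x \mid p(x) \text{ is } \succeq_i\text{-maximal in } \Img(p)\}$ (which coincides with your $\overline{\phi_i}$), observes that attainment is immediate, and derives totality from the nonemptiness of $\Img(p)$ together with the standing assumption that $f_i(S) \neq \varnothing$ for nonempty $S$. Your additional remark deriving realism directly from $\phi_i(p) \subseteq \Img(p)$ is a harmless elaboration of the same argument.
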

\begin{proof} Define a selection function for $\phi_i$ as
\[ \varepsilon_i(p) = \{ x \mid \mbox{$p(x)$ is $\succeq_i$-maximal in $\Img(p)$} \}. \]
Clearly if $x \in \varepsilon_i(p)$ then $p(x) \in f_i(\Img(p)) = \phi_i(p)$. Also, $\varepsilon(p)$ is total, since $\Img(p)$ is always non-empty.
\end{proof}

%Intuitively we can expect that quantifiers are `realistic' in the sense that $\varphi (p) \cap \Img (p) \neq \varnothing$ for all contexts $p$ (there is an equivalent definition for single-valued quantifiers in (Monad transformers for backtracking search)). In fact the following are equivalent:
%\begin{itemize}
%\item $\varphi$ is realistic
%\item there is a selection function that attains $\varphi$
%\end{itemize}
%And context-independent quantifiers are realistic. But the converse is not true -- as we will see the theory of selection functions is strictly more general than context-independence.

%---------------------------
\subsection{Rational Preferences and Utility Functions}
%---------------------------

The usual approach to model behaviour in economics is to either postulate a preference relation on the set of alternatives or to directly assume a utility function \cite{kreps2012microeconomic}. Typically, a certain structure is imposed on preference relations mainly for two reasons: either because additional structure is deemed to be a characteristic of an agent's rationality\footnote{This issue has been intensely debated, see \cite{Richter_rational_1971,mas-colell_microeconomic_1995,kreps2012microeconomic}.}, or because one wishes to work with utility functions. It is a classical result that for utility functions to exist, preferences relations have to be \emph{rational} \cite{kreps2012microeconomic}.

Now, rational preferences and utility functions are special cases of the generic construction of a context-independent quantifier we outlined in the last section. They are special because (i) we impose additional structure on $R$, that is, $\succeq_i$ is a total preorder and (ii) we focus on one particular $f_i$, that is, $f_i : \mathcal P (R) \to \mathcal P (R)$ defined by
\[ f_i (S) = \{ \succeq_i \text{-maximal elements of } S \} \] 

A rational preference relation can always be represented by a utility function. Translated into the selection function approach, the utility function can be characterised as the environment which is a mapping $p \colon X \to \RR$, attaching a real number to each element of the set of choices $X$. So, we can define the quantifier
\[
  \phi (p) = \max p
\]
which is attained by the selection function
\[
  \varepsilon (p) = \arg\max p
\]
Note the types $\phi \colon (X \rightarrow \RR) \rightarrow \mathcal P (\RR)$ and $\varepsilon \colon (X \rightarrow \RR) \rightarrow X$ respectively and that $\overline{\varepsilon} (p) = \phi (p)$. Thus, $\max$ and $\arg \max$ operators, which are universally used in the economic literature, become the prototypical examples of a context-independent quantifier and a selection function attaining it.

We can use selection functions in situations where instead of using utility functions we directly work with preferences. We demonstrate this with the following example.

\begin{example} \label{examp:decision1} Consider a simple decision problem of an agent who has to choose between three alternatives $X=\{A, B, C\}$.  Assume the agent has the following (total and transitive) preference order\footnote{We use an index in the ordering as we will consider further refinements of this example in the rest of this section.} $\{ A \succeq_1 B \;,\; B \succeq_1 C \;,\; A \succeq_1 C \}$. 
To use the standard maximisation tools one would associate each alternative with some numerical payoff or define an utility function for the agent. For example, assume that the payoff of the agent is equal to $p_A = 1$ if he chooses alternative $A$, $p_B = 0.5$ if he chooses $B$ and $p_C = 0$ otherwise. Then the utility function $u(x) = p_x$ represents the preferences of the agent. \\[1mm]
We can work with this utility function in the selection function framework. The utility function is then just the  environment $p \colon X \to \RR$ and the quantifier is the max operator, the selection function is the argmax operator, respectively. \\[1mm]
But we also can work directly with the preference order. Then, our environment function is $p \colon X \to X$. And we obtain a quantifier and a selection function, which takes the maximal element with respect to the preference ordering $\succeq_1$, namely
\[
\phi_1(p) = \max_{x\in (X, \succeq_1)} p(x) 
\quad \mbox{and} \quad 
\varepsilon_1 (p) = \argmax_{x\in (X, \succeq_1)} p(x).
\]
\end{example}

%---------------------------------------
\subsection{Beyond Rational Preferences }
%---------------------------------------

The generic construction of context-independent quantifiers instantiates choices based on rational preferences (or equivalently on utility maximisation) as special cases. In this section we show that we can go beyond these cases by allowing for a different structure on $R$ or by allowing for a different $f_i$ (or by relaxing both).

Utility functions are considered as a very convenient tool to represent and analyse choice behaviour. Still, the assumption that the preorder is total, which guarantees the existence of a utility function, is demanding and in fact more demanding than is necessary to rationalise choice behaviour \cite{Richter_rational_1971}. Secondly, when taking the perspective of preferences, from a positive as well as a normative viewpoint, there are good reason why a rational decision-maker may exhibit ``indecisiveness", meaning that his preference for a pair of outcomes is not defined \cite{Aumann1962}. 
Thirdly, consider a situation where the economist or some other agents/principal has only partial information about the preferences of an agent and considers him ``as if'' he has incomplete preferences \cite{Dubra_et_al2004_comple}.
Lastly, $R$ may be a set of alternatives to be chosen by a group of agents. Even if each individual's preferences are complete, the aggregate social welfare ordering does not have to be \cite{Ok2002}.

There have been various attempts to change standard formalisms to allow for an utility theory without the need to fulfil the completeness assumption.\footnote{For an important early contribution see \cite{Aumann1962}. More recent contributions include \cite{Ok2002} for utility representations in certain environments and \cite{Dubra_et_al2004_comple} for uncertain environments. See also references in \cite{Ok2002}.} When working with quantifiers and selection functions, the set of outcomes $R$ can have \textit{any} order. In particular, the preference relation does not have to be total. That is, given any preference relation $\succeq \, \subseteq R \times R$, an agent chooses the best alternatives as outlined above. So, one can very easily consider choices not in the scope of utility functions without the need to change the framework. To be clear, the selection function corresponding to the preference ordering is
\[ \argmax (p) = \{ x \in X \mid r \succeq p(x) \implies r \not\in \operatorname{Im} (p) \} \]
ie. a maximal outcome is one which is not known to be worse than any attainable outcome.

\begin{example} Continuing from our earlier Example \ref{examp:decision1}, suppose the agent, who has to choose between alternatives $X=\{A, B, C\}$, prefers $A$ over $B$, but he has incomplete preferences and cannot rank the alternative $C$. In this case the order relation is simply $A \succeq_2 B$. In our setting this is seamlessly dealt with, as we are simply picking the maximal element with respect to this partial ordering. In particular, the quantifiers and selection functions are the same, only the ordering has changed
\[
\phi_2(p) = \max_{x\in (X, \succeq_2)} p(x) 
\quad \mbox{and} \quad
\varepsilon_2 (p) = \argmax_{x\in (X, \succeq_2)} p(x). 
\]
\end{example}

Note that it is not directly possible to use the max operator and a utility function for the last example. However, using selection functions and working directly on the preorders, we can just use the same operator as for preference orders that are total.

%-------------------------------------
\subsection{Beyond Maximisation and Standard Rationality}
%------------------------------------
The utility approach is intimately linked to the assumption that the agent fully optimises. The behavioural economic literature as well as the psychological literature have documented deviations from optimising behaviour, and have collected various decision ``heuristics'' \cite{camerer2011behavioral,kahneman2011thinking}. Quantifiers provide a nice way to model such deviations. Moreover, even situations that can be modelled with utility functions may have (more) natural representations in the quantifier framework.

\begin{example}

Consider a simple heuristic of a person ordering wine in a restaurant. Suppose he always chooses the second most-expensive wine. In terms of selection functions, let $X$ be the set of wines available in a restaurant, and $p: X \rightarrow \RR$ the price attached to each wine $x_i$ ($i=1, ...,N$) on the menu. Denote with $r_i = p(x_i)$ the price of wine $x_i$. Given a maximal strict chain $r_n > r_{n-1} > \ldots > r_1$ in $\RR$, let us call $r_{n-1}$ a sub-maximal element. The ``goal" of the agent can be described by the quantifier
\[ \phi_{>}(p^{X \to \RR}) = \{ \mbox{sub-maximal elements with respect to $>$ within $\Img(p)$} \}. \]
Such quantifiers are attainable with selection functions
\[ \varepsilon_{>}(p^{X \to \RR}) = \{ x \mid \mbox{$p(x)$ is a sub-maximal element of $\Img(p)$} \} \]
since clearly $p(x) \in \phi_>(p)$ for $x \in \varepsilon_>(p)$.

\end{example}

A crucial point of the above example is the additional degree of freedom of modelling as it is possible to vary the choice operator itself and not being automatically restricted to the max operator and to consider behaviour to be rationalised by preferences.\footnote{In some sense, our viewpoint is similar to choice rules or choice functions. The behaviour is the focal point of analysis. See \cite{mas-colell_microeconomic_1995} or \cite{rubinstein2006lecture}.} 

Obviously, one could rationalise the above choice as the outcome of a maximisation. One could redefine preferences and utility functions such that the outcome of the maximisation is just the second most expensive wine.\footnote{Note, if the prices of the wines represented preferences, a rationalisation of 2nd best choices were not possible (see \cite{rubinstein2006lecture}).}  However, while equivalent in outcome, the causal model of behaviour is different. The classical approach would force the choice to be rational, whereas in our setting this question remains open. The quantifier just formally describes an agent's behaviour. It could be that the choice pattern is a habitual heuristic or it could be the reduced form pattern of rational decision-making in a larger context.

Of course, instead of using the second most expensive wine, one could consider alternative heuristics, such as choosing the wine closest to the average price of all available wines on the menu, or within a class of wines, etc..

Moreover, one could also combine this heuristic with preferences. Say, the guest is a fan of white wines, and he strictly prefers Chardonnay over Riesling. One could model the agent as first restricting the choices to the wines that are Chardonnay (if available) and then apply his second most expensive decision heuristic to the class of Chardonnay available.

%---------------------------------------

\subsection{Context-Dependent Decision Problem}\label{sec:context_dec}

So far, we have focused only on the generic context-independent quantifiers. As the last examples illustrate with this construction we can already go beyond choices motivated by rational preferences. Yet, we can do more. We can allow for quantifiers that do not only take the image of $p$ as input but the complete function.

Next, we provide an example to illustrate that this opens up a complete new dimension. Indeed, with context-dependent quantifiers it is possible to go far beyond what can be modelled using utility functions.

\begin{example} [Keynesian beauty contest] \label{keynesian-ex1} Consider the following situation: there are three players, the judges $J=\{J_1, J_2, J_3\}$. Each judge votes for two contestants $A$ and $B$. 
The set of outcomes is given by $X=\{A, B\}$ denoting the winner of the contest. 
The winner is determined by the simple majority rule of type $\operatorname{maj}: X\times X\times X \rightarrow X$.  To begin with, we assume that the judges rank the contestants according to a preference ordering. For example, suppose judges 1 and 2 prefer $A$ and judge 3 prefers $B$. We consider the decision problem of the first judge who observes the choices of two others.  Suppose, judge 1 observes that judge 2 votes for $A$ and judge 3 votes for $B$, and he has to decide how to cast his vote. The order relation of the first judge is $A \succeq_1 B$. Thus we obtain a quantifier and a selection function, which is taking the maximal element with respect to the ordering, as before: 
\[
\phi_1(p) = \max_{x \in (X, \succeq_1)} p(x)
\quad\quad
\varepsilon_1 (p) = \argmax_{x \in (X, \succeq_1)} p(x)
\]
Now, assume judges 2 and 3 continue to prefer contestants $A$ and $B$ respectively, but judge 1 has different preferences: he prefers to support the winner of the contest. He is only interested in voting for the winner of the contest and he has no preferences for the contestants per se.  Again, judge 1 observes that judges 2 and 3 vote for contestants $A$ and $B$. He can be described by a fixed point operator: 
\[
\varphi_1'(p) = \varepsilon_1'(p) = \{ x \in \{A, B\} \mid p(x) = x \}
\]
\end{example}

In practice, most functions do not have a fixed point and so the fixed point quantifier will often give the empty set. For the purposes of modelling a particular situation we might want to `complete' $\varphi$ in different ways, describing what an agent might do in the event that no fixed point exists. Our examples in Section \ref{sec:examples} suggest that the best way to do this is to return the entire set $X$, modelling the fact that the agent is equally `happy' (or unhappy) with any choice if no good choice exists.

The situation above becomes far more interesting when considered as a game in which several agents with potentially different concerns cast a vote. We analyse this in detail in Section \ref{sec:examples}.

%---------------------------------------
\section{General Games} 
%---------------------------------------
\label{sec:games}

Quantifiers and selection functions as introduced in the previous section for decision problems can be directly adapted to model games. All the additional expressiveness introduced before can be used in games as well. In this section, we first define a game and the adequate Nash equilibrium concept.  Moreover, we define a refinement of Nash equilibria that becomes relevant with context-dependent quantifiers.

\begin{definition}[General Games] \label{def-game-general} A general $n$-players game, with a set $R$ of outcomes and sets $X_i$ of strategies for the $i$th player, consists of
\begin{enumerate}
\item for each player $1 \leq i \leq n$, a selection function \[ \varepsilon_i : (X_i \to R) \to \mathcal P (X_i) \] representing that player's preferred strategies in each game context.
\item the outcome function
\[ q : \prod_{i =1 }^n X_i \to R \]
i.e., a mapping from the strategy profile to the final outcome.
\end{enumerate}
\end{definition}

Intuitively, we think of the outcome function $q$ as representing the `situation', or the rules of the game, while we think of the selection functions as describing the agents. Thus we can imagine the same agent in different situations, and different agents in the same situation. This allows us to decompose a modelling problem into a global and a local part: modelling the situation and modelling the players.

\begin{remark}[Strategic game \cite{OsbRub94}] \label{normal-form-remark} The ordinary definition of a strategic game of $n$- players with standard payoff functions is a particular case of Definition \ref{def-game-general} when
\begin{itemize}
\item for each player $i$ set of strategies $X_i$
\item the set of outcomes $R$ is $\mathbb R^n$, modelling the vector of payoffs obtained by each player,
\item the selection function of player $i$ is $\iargmax{i} \colon (X_i \to \RR^n) \to {\mathcal P}(X_i)$, i.e. $\argmax$ with respect to the $i$-th coordinate, representing the idea that each player is solely interested in maximising their own payoff,
\item the $i$-th component of the outcome function $q \colon \prod_{i=1}^n X_i \to \RR^n$ can be viewed as the payoff function $q_i \colon \prod_{j =1 }^n X_j \to \RR$ of the $i$-th player.
\end{itemize}
In the following when we refer to \emph{normal forms games} we mean the special case of Definition \ref{def-game-general} with outcome type and selection functions as above. 
\end{remark}

We illustrate this by modelling the classic ``Battle of the Sexes" game using our framework.

\begin{example}[Battle of the Sexes] \label{ex-cbos} A couple agrees to meet together, but both cannot remember if they will be attending the ballet (B) or a football match (F). The husband prefers football over ballet, while the wife prefers ballet over football. But irrespective of their personal preferences, they would rather be together than by themselves in different places. In this game the set of outcomes is $R = \RR^2$ and the possible strategies for both players are $X_h = X_w = \{B,F\}$. What is normally described as the \emph{payoff matrix} for us shall be viewed as the \emph{outcome function} $X_w \times X_h \to R$, i.e. a mapping from strategy profiles to outcomes. For the game in question the outcome function is shown in Table \ref{tab:BattleofSexes-classic}.
\begin{table}[t!]
\caption{Battle of the Sexes}
\begin{center}
\begin{tabular}{|c|c|c|c|l|}\hline
Strategy 	& Ballet    & Football      \\ \hline \hline
Ballet 	    & 3,2       & 1,1    		\\ \hline
Football    & 0,0       & 2,3         	\\ \hline
\end{tabular}
\end{center}
\label{tab:BattleofSexes-classic}
\end{table}
As standard in classical game theory, both players in this case are trying to maximise their corresponding coordinate of the outcome tuple $(r_w, r_h)$, i.e. the wife wants to maximise $r_w$ whereas the husband would like to maximise $r_h$. Therefore, the selection functions for the two players are the maximisation functions
\begin{itemize}
\item $\varepsilon_w (p) = \{ r_w \, | \, p(r_w) = \max_{x \in X_w} \pi_1(p(x)) \}$
\item $\varepsilon_h (p) = \{ r_h \, | \, p(r_h) = \max_{x \in X_h} \pi_2(p(x)) \}$
\end{itemize}
where $\pi_1, \pi_2 \colon \RR^2 \to \RR$ are the first and second projections, respectively. Note the types of the selection functions are $\varepsilon_w \colon (X_w \to R) \to \power{X_w}$ and $\varepsilon_h \colon (X_h \to R) \to \power{X_h}$.
\end{example}

In order to illustrate the need for multi-valued selection functions, let us also consider an extension of the Battle of the Sexes game where an intermediate possibility is included, namely going to the cinema (\emph{C}). Consider also that if the husband had to go alone somewhere he would prefer football over cinema, and cinema over ballet. The wife on the other hand, if she had to go alone, she would prefer ballet over cinema, and cinema over football. If they are together then the husband would prefer to be at the football, but would consider ballet and cinema almost as nice. For the wife, if they are together she would rather be at the ballet, but would consider being at the football or cinema equally pleasant.

\begin{example}[Extended Battle of the Sexes I] \label{ex-bos-I} In this extended version of the Battle of Sexes the set of outcomes is $R = \RR^2$ and the possible strategies for both players are $X_h = X_w = \{B,C,F\}$. The outcome function is shown in Table \ref{tab:BattleofSexes}.
\begin{table}[t!]
\caption{Extended Battle of the Sexes}
\begin{center}
\begin{tabular}{|c|c|c|c|}\hline
Strategy 	& Ballet 	     & Cinema   & Football      \\ \hline \hline
Ballet 	    & 3,2            & 2,1      & 2,2    		\\ \hline
Cinema      & 1,0            & 2,2      & 1,2       	\\ \hline
Football    & 0,0            & 0,1      & 2,3         	\\ \hline
\end{tabular}
\end{center}
\label{tab:BattleofSexes}
\end{table}
But notice that although we have extended the game by expanding the sets of strategies, and hence the outcome function, the selection functions of both husband and wife are still the same as in Example \ref{ex-cbos}, as they both still aim to maximise their own payoffs.
\end{example}

%---------------------------------------
\subsection{Nash Equilibrium}
%---------------------------------------

Consider a strategy profile $\mathbf x \in \prod_{i=1}^n X_i$. The outcome of this strategy profile is $q (\mathbf x)$. We want to define the context in which one player unilaterally changes his strategy. This is given by the function
\[ 
\mathcal U^q_i (\mathbf x) (x) = q (\mathbf x [i \mapsto x])
\]
of type
\[ 
\mathcal U^q_i : \prod_{j = 1}^n X_j \to (X_i \to R).
\]
Here $\mathbf x [i \mapsto x]$ is the tuple obtained from $\mathbf x$ by replacing the $i$th entry with $x$.
We call the $n$ functions $\mathcal U^q_i$ ($1 \leq i \leq n$) the \emph{unilateral maps} of a given strategy profile $\mathbf x$. They were introduced in \cite{hedges13} in which it is shown that the proof of Nash's theorem amounts to showing that the unilateral maps have certain topological (continuity and closure) properties. The concept of a context was introduced later in \cite{hedges14a}, so now we can say that $\mathcal U^q_i (\mathbf x) : X_i \to R$ is the context 
in which the $i$th player has unilaterally changed his strategy, so we call it a unilateral context.

\begin{example}[Extended Battle of the Sexes II] Continuing from Example \ref{ex-bos-I}, let us illustrate the notion of a unilateral map. Consider the wife (w) and the strategy profile $\mathbf x = (B,C)$, i.e. she decided to go the the ballet and he goes to the cinema. Her unilateral context with respect to this strategy profile is $\mathcal U^q_w (B,C)(x) = q(x,C)$. Or, unfolding the definition of $q$ this can be written more explicitly as
\[
\mathcal U^q_w (B,C)(x) = 
\begin{cases}
(2,1) &\text{ if } x = B \\
(2,2) &\text{ if } x = C \\
(0,1) &\text{ if } x = F.
\end{cases}
\]
The map describes what are the possible payoffs as the wife explores her different choices given that the choice of the husband is fixed. Using her quantifier $\overline{\varepsilon_w}$ on this unilateral context $\mathcal U^q_w (B,C)$ we obtain her preferred outcomes for the context 
\[
\overline{\varepsilon_w} (\mathcal U^q_w (B,C)) = 
\{ (r_w, r_h) \, | \, r_w \in \max_{x \in X_w} \pi_1(\mathcal U^q_w (B,C)(x)) \} = \{(2,1),(2,2)\}
\]
since $\max_{x \in X_w} \pi_1(\mathcal U^q_w (B,C)(x)) = \{ 2 \}$. Similarly, for the husband, his unilateral context with respect to this strategy profile $\mathbf x = (B,C)$ is
\[
\mathcal U^q_h (B,C)(x) = 
\begin{cases}
(3,2) &\text{ if } x = B \\
(2,1) &\text{ if } x = C \\
(2,2) &\text{ if } x = F.
\end{cases}
\]
Hence, his preferred outcomes for this context are
\[
\overline{\varepsilon_h} (\mathcal U^q_h (B,C)) = 
\{ (r_w, r_h) \, | \, r_h \in \max_{x \in X_h} \pi_2(\mathcal U^q_h (B,C)(x)) \} = \{(3,2),(2,2)\}.
\]
\end{example}

Using this notion of a unilateral map we can abstract the classical definition of Nash equilibrium to general games defined by selection functions as follows. 

\begin{definition}[General Nash equilibrium] \label{def-gen-nash} Given a general game $(\varepsilon_i, q)$, we say that a strategy profile $\mathbf x$ is in Nash equilibrium iff \[ 
q (\mathbf x) \in \overline{\varepsilon_i} (\mathcal U^q_i (\mathbf x)) 
\]
for all players $1 \leq i \leq n$.
\end{definition}

As with the usual notion of Nash equilibrium, we are also saying that a strategy profile is in Nash equilibrium if no player has a motivation to unilaterally change their strategy. This is expressed formally by saying that preferred outcomes, specified by the selection function when applied to the unilateral context, contain the outcome obtained by sticking with the current strategy. We illustrate now how this notion indeed coincides with the usual notion when selection functions are maximisation functions:

\begin{example}[Extended Battle of the Sexes III] There are obviously four strategy profiles which are Nash equilibria, namely $\mathbf x = (B,B)$ and $\mathbf x = (C,C)$ and $\mathbf x = (F,F)$ and $\mathbf x = (B,F)$, in the standard sense. Let us see how $\mathbf x = (B,C)$ is not an equilibrium, also in our sense. Consider first the wife (w). As we have calculated her set of preferred outcomes in her unilateral context $\mathcal U^q_w (B,C)$ is $\{(2,1),(2,2)\}$, and indeed
\[ 
q (B,C) = (2,1) \in \{(2,1),(2,2)\} = \overline{\varepsilon_w} (\mathcal U^q_w \mathbf (B,C)) 
\]
so that the wife is happy with the current choice (she is happy to be alone, as long as she is at the ballet). On the other hand, for the husband we have calculated that his preferred outcomes in the unilateral context $\mathcal U^q_h (B,C)$ are $\{(3,2),(2,2)\}$, which in this case does not include the current outcome
\[ 
q (B,C) = (2,1) \notin \{(3,2),(2,2)\} = \overline{\varepsilon_h} (\mathcal U^q_h (B,C)).
\]
Hence, the husband can improve his situation by either going to the football on his own or joining his wife at the ballet. Similarly to above one can verify that all four standard equilibria are also equilibria in the sense of Definition \ref{def-gen-nash}.
\end{example}

%---------------------------------------
\subsection{Selection Equilibrium}\label{sec:context_Nash}
%---------------------------------------

The definition of Nash equilibrium is based on quantifiers. However, we can also use selection functions directly to define an equilibrium condition.

\begin{definition}[Selection equilibrium] \label{def-gen-nash-context} Suppose each player's move is a good move in the unilateral context, that is,
\[ 
x_i \in \varepsilon_i (\mathcal U^q_i (\mathbf x))
\]
for all players $1 \leq i \leq n$, where $x_i$ is the $i$th component of the tuple $\mathbf x$.
A strategy profile satisfying this condition for each player is called a selection equilibrium. 
\end{definition}

Our goal in this section is to show that selection equilibrium is a strict refinement of generalised Nash equilibrium. We start by showing that every selection equilibrium is also an equilibrium in the sense of Definition \ref{def-gen-nash}.

\begin{theorem} Every selection equilibrium is a generalised Nash equilibrium.
\end{theorem}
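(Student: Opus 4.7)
The plan is to unfold the definitions and observe that the selection equilibrium condition forces the actual outcome $q(\mathbf x)$ into the image-set $\overline{\varepsilon_i}(\mathcal U^q_i(\mathbf x))$, which is precisely the generalised Nash condition.

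First I would fix a selection equilibrium $\mathbf x$ and an arbitrary player $i$, so that by hypothesis $x_i \in \varepsilon_i(\mathcal U^q_i(\mathbf x))$. Applying the definition of the induced quantifier $\overline{\varepsilon_i}(p) = \{p(x) \mid x \in \varepsilon_i(p)\}$ with $p = \mathcal U^q_i(\mathbf x)$ immediately gives
\[
\mathcal U^q_i(\mathbf x)(x_i) \in \overline{\varepsilon_i}(\mathcal U^q_i(\mathbf x)).
\]

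The key (and essentially only) calculation is the identity $\mathcal U^q_i(\mathbf x)(x_i) = q(\mathbf x)$. This follows directly from the definition $\mathcal U^q_i(\mathbf x)(x) = q(\mathbf x[i\mapsto x])$, together with the trivial fact that substituting the $i$-th coordinate of $\mathbf x$ for itself leaves $\mathbf x$ unchanged, so $\mathbf x[i\mapsto x_i] = \mathbf x$. Combining this with the previous display yields
\[
q(\mathbf x) \in \overline{\varepsilon_i}(\mathcal U^q_i(\mathbf x)),
\]
which is exactly Definition \ref{def-gen-nash} for player $i$. Since $i$ was arbitrary, $\mathbf x$ is a generalised Nash equilibrium.

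I do not anticipate any genuine obstacle here: the argument is purely definitional, and the whole content lies in recognising that $\overline{\varepsilon_i}$ repackages good moves as good outcomes (cf.\ the Attainability clause of the proposition on $\varepsilon \mapsto \overline{\varepsilon}$) and that the $i$-th unilateral context, evaluated at the actually-played strategy, reproduces the realised outcome. The strictness of the refinement (as opposed to plain inclusion) is not part of this statement and would be established separately, presumably by exhibiting a context-dependent selection function where some $x \in \overline{\overline{\varepsilon}}(p) \setminus \varepsilon(p)$ yields a Nash but not a selection equilibrium.
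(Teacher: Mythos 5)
Your proposal is correct and follows exactly the same route as the paper's own proof: unfold the definition of $\overline{\varepsilon_i}$ to turn the good move $x_i$ into a good outcome, then use $\mathbf x[i \mapsto x_i] = \mathbf x$ to identify $\mathcal U^q_i(\mathbf x)(x_i)$ with $q(\mathbf x)$. Nothing is missing.
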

\begin{proof} Recall that by definition, for every context $p$ we have 
\[ x \in \varepsilon_i (p)\implies p(x) \in \overline{\varepsilon_i} (p) \]
since $\overline{\varepsilon_i} (p) = \{ p (x) \mid x \in \varepsilon_i (p) \}$. Assuming that $\mathbf x$ is a selection equilibrium we have
\[ x_i \in \varepsilon_i (\mathcal U^q_i (\mathbf x)) \]
Therefore
\[ \mathcal U^q_i (\mathbf x) (x_i) \in \overline{\varepsilon_i} (\mathcal U^q_i (\mathbf x)) \]
It remains to note that $\mathcal U^q_i (\mathbf x) ( x_i) = q (\mathbf x)$, because $\mathbf x [i \mapsto x_i] = \mathbf x$.
\end{proof}

% The general concept is qualitative rather than quantitative: in an equilibrium no player can unilaterally improve his situation by changing from a bad outcome to a good outcome. 
Let us again illustrate this new concept using our running example.

\begin{example}[Extended Battle of the Sexes IV] As before, to illustrate concepts, consider the wife and the strategy profile $\mathbf x = (B,C)$. Her selection function yields: 
\[ \varepsilon_w (\mathcal U^q_w (B,C)) 
   = \{ r_w \; | \; \mathcal U^q_w (B,C)(r_w) \in \max_{x \in X_w} \pi_1(\mathcal U^q_w (B,C)(x)) \}
   = \{C\}. \]
As it holds that her current choice $x_w = B$ does not belong to this set of preferred \emph{strategies}, the strategy profile would not be in selection equilibrium from the point of view of the wife. Looking from the point of view of the husband, on the strategy profile $\mathbf x = (B,C)$ we can calculate his preferred strategies using his selection function as
\[ \varepsilon_h (\mathcal U^q_h (B,C)) 
   = \{ r_h \; | \; \mathcal U^q_h (B,C)(r_h) \in \max_{x \in X_h} \pi_2(\mathcal U^q_h (B,C)(x)) \}
   = \{B,F\}. \]
But his current choice is $x_h  = C$, so he would also be tempted to change his mind and go to either the ballet with his wife, or to the football on his own.
\end{example}

In our version of the Battle of the Sexes game the set of generalised Nash equilibria and the set selection equilibria are in fact identical. This is not a coincidence, as the following theorem shows that for games based on the maximisation selection function the classical notion, and our two generalised notions coincide.

% This begs the question, what is  the relation between Nash equilibria and context dependent equilibria in general? 
% First, we show that the result in the examples above was not coincidence because both definitions concur in classical game theory. 

\begin{theorem} \label{thm-classical} In a strategic game (see Remark \ref{normal-form-remark}) the standard definition of Nash equilibrium and the equilibrium notions of Definitions \ref{def-gen-nash} and \ref{def-gen-nash-context} are equivalent. 
\end{theorem}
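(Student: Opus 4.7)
The plan is to verify the theorem by unfolding all three definitions and observing that, in the special case of $R = \RR^n$ with selection functions $\varepsilon_i = \iargmax{i}$, each condition reduces verbatim to the classical ``no unilateral profitable deviation'' statement. Since Definitions \ref{def-gen-nash} and \ref{def-gen-nash-context} are each indexed player-by-player, it suffices to check the equivalence of the three conditions on a single fixed player $i$, given an arbitrary strategy profile $\mathbf{x}$.

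First I would fix notation: write $q_i = \pi_i \circ q$ for the classical payoff function of player $i$, so the standard Nash condition at $\mathbf{x}$ reads $q_i(\mathbf{x}) \geq q_i(\mathbf{x}[i \mapsto x'])$ for every $x' \in X_i$. Next I would unfold the selection equilibrium condition. With $\varepsilon_i = \iargmax{i}$, the statement $x_i \in \varepsilon_i(\mathcal{U}^q_i(\mathbf{x}))$ expands, by the definition of $\iargmax{i}$, to $\mathcal{U}^q_i(\mathbf{x})(x_i)_i \geq \mathcal{U}^q_i(\mathbf{x})(x')_i$ for all $x'$. Using that $\mathbf{x}[i \mapsto x_i] = \mathbf{x}$, the left side equals $q(\mathbf{x})_i = q_i(\mathbf{x})$, and the right side equals $q_i(\mathbf{x}[i \mapsto x'])$, so this is the classical Nash condition.

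For the generalised Nash condition I would first record that $\overline{\iargmax{i}} = \imax{i}$, which is immediate from the definitions of $\imax{i}$ and $\overline{\varepsilon}$ (both describe the $i$-coordinate-maximal elements of the image). Then $q(\mathbf{x}) \in \imax{i}(\mathcal{U}^q_i(\mathbf{x}))$ unfolds into two clauses: membership of $q(\mathbf{x})$ in the image $\Img(\mathcal{U}^q_i(\mathbf{x}))$, and the inequality $q(\mathbf{x})_i \geq \mathcal{U}^q_i(\mathbf{x})(x')_i$ for all $x' \in X_i$. The image clause is automatic since $q(\mathbf{x}) = \mathcal{U}^q_i(\mathbf{x})(x_i)$, and the inequality clause is again the classical Nash condition.

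There isn't really a hard step here; the whole argument is a diagram chase through the definitions. The only place that requires attention is making sure the direction of the equalities $\mathbf{x}[i \mapsto x_i] = \mathbf{x}$ and $\overline{\iargmax{i}} = \imax{i}$ is spelled out, since these are what collapse all three conditions to the same inequality. After chaining the three equivalences, the theorem follows by quantifying over $i$.
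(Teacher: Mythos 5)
Your proposal is correct and follows essentially the same route as the paper: unfold the definitions of $\iargmax{i}$, $\imax{i}$, and the unilateral context, and use the identity $\mathbf x[i \mapsto x_i] = \mathbf x$ to collapse all three conditions to the classical inequality $q(\mathbf x)_i \geq q(\mathbf x[i\mapsto x'])_i$. If anything you are slightly more careful than the paper, which disposes of the quantifier-based Definition \ref{def-gen-nash} with ``the same holds,'' whereas you explicitly verify $\overline{\iargmax{i}} = \imax{i}$ and check the image-membership clause.
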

\begin{proof}
Suppose the set of outcomes $R$ is $\RR^n$ and that the selection functions $\varepsilon_i$ are $i$-$\argmax$, i.e. maximising with respect to $i$-th coordinate. Unfolding Definition \ref{def-gen-nash-context} and that of a unilateral context $\mathcal U^q_i (\mathbf x)$, we see that a tuple $\mathbf x$ is an equilibrium strategy profile if for all $1 \leq i \leq n$
\[ x_i \in i\textup{-}\argmax_{x \in X_i} q (\mathbf x [i \mapsto x])). \]
But $x_i$ is a point on which the function $p(x) = q (\mathbf x [i \mapsto x])$ attains its maximum precisely when $p(x_i) \in \max_{x \in X_i} p(x)$. Hence 
\[ q(\mathbf x) = q (\mathbf x [i \mapsto x_i]) = p(x_i) = \max_{x \in X_i} p(x) = \max_{x \in X_i} q (\mathbf x [i \mapsto x]) \]
which is the standard definition of a Nash equilibrium: for each player $i$, the outcome obtained by not changing the strategy, i.e. $q(\mathbf x)$, is the best possible amongst the outcomes when any other available strategy is considered, i.e. $\max_{x \in X_i} q (\mathbf x [i \mapsto x])$. The same holds for our Definition \ref{def-gen-nash} based on the quantifier $\overline{\varepsilon_i}$.
\end{proof}

\ignore{
Let us conclude this section with a brief discussion about the different notions of equilibrium considered.
%
%What is the relation between Nash equilibria and context dependent equilibria?  In fact, it is easy to prove that context dependent equilibria are an equilibrium refinement of Nash equilibria.
%
Fix a strategy profile $\mathbf x$ and let $p_i : X_i \to R$ be the unilateral context $p_i = \mathcal U^q_i (\mathbf x)$. In order for Nash equilibria and selection equilibria to coincide it must be the case that 
\[ \underbrace{p_i(x_i) \in \overline{\varepsilon_i} (p_i) = \bigcup_{x \in \varepsilon_i p_i} p_i(x)}_{(1)} \quad\quad \Rightarrow \quad\quad \underbrace{x_i \in \varepsilon_i (p_i)}_{(2)}. \]
Consider the particular selection functions
\[ \varepsilon_i (q)= \{ x \in X \mid q(x)_i \geq q(x')_i \text{ for all } x' \in X \} \]
%
%and the quantifiers
%
%\[ \overline{\varepsilon_i} (q)= \{ q(x) \mid q(x)_i \geq q(x')_i \text{ for all } x' \in X \} \]
%
which are used in a strategic game. Now both of $(1)$ and $(2)$ are equivalent to
\[ p_i(x_i) \geq p_i(x'), \text{ for all } x' \in X_i. \]
}

\ignore{
To see that this is just the definition of a Nash equilibrium, again we note that $f ( x_i ) = q (\mathbf x)$.

If we suppose for simplicity that our quantifiers and selection functions are single-valued, 
we can see that the proof above is basically a cancellation argument, namely we go from
\[ f ( x_i ) = \max (f) = f (\argmax f) \]
to
\[ x_i = \argmax f \]
Typically a cancellation of this form is invalid unless $f$ is an injective function. 
Thus from an algebraic point of view it is a remarkable property of $\argmax$ that this cancellation
\[ f (x) = f (\argmax f) \implies x = \argmax f \]
is always valid, even when $f$ is not injective. From an alternative point of view, this `remarkable property' 
is nothing but the definition of $\argmax$: if the value of $f$ at $x_i$ is maximal then $x_i$ 
is in $\argmax$ of $f$.\footnote{This `deep triviality' is reminiscent of the Curry-Howard isomorphism in logic, 
which is the connection between proofs and computer programs. A simple proof system (natural deduction 
for minimal logic) and a simple programming language (simply-typed $\lambda$ calculus) are equal by definition, 
but the philosophical implications are huge and the consequences are still being explored.} 

% A good example is the function $f : \mathbb R \to \mathbb R$ given by $f(x) = 1 - x^2$. This function
% is non-injective almost everywhere, and the extreme value of $f$, namely $x = 0$, is in fact the unique
% point at which $f$ is attained only once.

It is also easy to see that the same cancellation argument fails if we replace $\argmax$ 
with another single-valued selection function, and this is the reason 
that Nash equilibria and selection equilibria are distinct. 
Suppose we replace $\argmax$ with a single-valued fixpoint operator 
(ignoring the fact that some functions have no fixpoint). Then the equivalent cancellation
\[ f (x) = f (\fix f) \implies x = \fix f \]
is invalid. Consider in particular the constant function $f : \mathbb R \to \mathbb R$ given by $f(x) = 1$, 
and the particular value $x = 0$. The unique fixpoint of $f$ is $\fix f = 1$. 
Therefore it is the case that $f (x) = 1 = f (\fix f)$, but it is not the case 
that $x = \fix f$.
}
Theorem \ref{thm-classical} above shows that in the case of ``classical" strategic games the usual concept of a Nash equilibrium coincides with both the general Nash equilibrium and the selection equilibrium. On the other hand, for general games, Theorem 4.9 proves that every selection equilibrium is a generalised Nash equilibrium
\[
\text{selection equilibria} \subsetneq\text{generalised Nash equilibria}
\]
In Section \ref{sec:examples} we give several examples showing that the inclusion above is strict, i.e. that there are games where selection equilibrium is a strict refinement of generalised Nash equilibrium. 

We close this section with a last consideration of the classic Battle of the Sexes (Example \ref{ex-cbos}). Here, we analyze the game in a different way. We do not use utility functions and we do not use max operators as selection functions. 

\subsection{Battle of the Sexes -- Qualitatively}

Let us represent the Battle of the Sexes game in our framework in a truly idiomatic way, by removing numerical utilities completely and focussing only on the qualitative information. The choices of moves are still $X_w = X_h = \{ B, F \}$, but now the outcomes are merely a description of what happens, namely who goes to which event. We set $R = X_w \times X_h = \{ B, F \}^2$, so an element of $R$ is a pair where the first coordinate tells what the wife chose, and the second tells what the husband chose. Now the outcome function $q : X_w \times X_h \to R$ is simply the identity function. 

We build selection functions for each player in a compositional way, by observing that each player has a lexicographic preference: their first priority is to be coordinated, and all else being equal, their second priority is to go to their favourite event (ballet and football, respectively). We describe an element of $R$ as `coordinated' if its first coordinate equals its second coordinate, so the coordinated outcomes are $(B,B)$ and $(F,F)$. There is a selection function $\varepsilon_c$ that chooses all moves that lead to a coordinated outcome, which we can write as an inverse image
\[ \varepsilon_c (p) = p^{-1} ( \{ (B,B), (F,F) \}) \]
Next we have a pair of selection functions $\varepsilon_b, \varepsilon_f$ representing the `purely selfish' aims of attending ballet and football respectively:
\[ \varepsilon_b (p) = p^{-1} ( \{ (B,B), (B,F) \}) \]
\[ \varepsilon_f (p) = p^{-1} ( \{ (B,F), (F,F) \}) \]
Now we are ready to build our players' selection functions compositionally. Given a context, the joint selection function checks whether there are any moves which satisfy both `personalities' given by the coordinating and selfish selection functions. If so, the joint selection function returns those moves. If there are no moves satisfying both then the coordination takes priority, and the selfish aspect is ignored. Therefore the wife's selection function is
\[ \varepsilon_w (p) = \begin{cases}
\varepsilon_c (p) \cap \varepsilon_b (p) & \text{ if nonempty } \\
\varepsilon_c (p) & \text{ otherwise }
\end{cases} \]
and the husband's selection function is
\[ \varepsilon_h (p) = \begin{cases}
\varepsilon_c (p) \cap \varepsilon_f (p) & \text{ if nonempty } \\
\varepsilon_c (p) & \text{ otherwise }
\end{cases} \]
Observe that in building these selection functions we have \emph{not} assumed that the game's outcome function is the identity function, and so these selection functions will still work correctly if we change the rules of the game, for example if the couple have an agreement that sometimes forces them to go against their choices.

We will verify that $(B,B)$ is a selection equilibrium, and $(B,F)$ is not. For the strategy $(B,B)$ the wife's unilateral context is
\[ \mathcal U^{id}_w (B,B) (x) = (x,B) \]
The individual selection functions give
\[ \varepsilon_c (\mathcal U^{id}_w (B,B)) = \{ B \} \qquad \varepsilon_b (\mathcal U^{id}_w (B,B)) = \{ B \} \]
and so the wife's selection function gives
\[ \varepsilon_w (\mathcal U^{id}_w (B,B)) = \{ B \} \]
which means the wife has no incentive to deviate. The husband's unilateral context is
\[ \mathcal U^{id}_h (B,B) (x) = (B,x) \]
The individual selection functions give
\[ \varepsilon_c (\mathcal U^{id}_h (B,B)) = \{ B \} \qquad \varepsilon_f (\mathcal U^{id}_h (B,B)) = \{ F \} \]
Now we have a clash between the two personalities because these sets do not intersect. The coordination takes priority, and so
\[ \varepsilon_h (\mathcal U^{id}_h (B,B)) = \{ B \} \]
and the husband has no incentive to deviate. Therefore $(B,B)$ is a selection equilibrium.

For the strategy $(B,F)$ we see that the husband has an incentive to unilaterally deviate to $B$. His unilateral context is
\[ \mathcal U^{id}_h (B,F) (x) = (B,x) \]
The individual selection functions give
\[ \varepsilon_c (\mathcal U^{id}_h (B,F)) = \{ B \} \qquad \varepsilon_f (\mathcal U^{id}_h (B,F)) = \{ F \} \]
and so
\[ \varepsilon_h (\mathcal U^{id}_h (B,F)) = \{ B \} \]
Therefore the husband has an incentive to deviate to $B$, and so $(B,F)$ is not a selection equilibrium.

Although this is a trivial example, we believe that this method of modelling will distinguish itself from utility-based methods in its ability to scale easily to very complex situations. A realistic agent may have many competing aims, some context-dependent and some context-independent (for example immediate profit, long-term profit, fairness concerns, environmental concerns). Using selection functions allows us to treat each aim individually, and then afterwards combine them (with rules for breaking ties, such as the lexicographic rule in this example) into a realistic description of the agent. We will discuss these ideas in the conclusion.

\ignore{

\begin{example}[Battle of the Sexes revisited (2)]
Assume that the set $R$ of outcomes is now represented by $R=\{(B,F),(F,F),(B,B),(F,B)\}$ and we do not translate the outcomes into the real number payoffs. $X_i=\{B,F\}$ are still the possible strategies of both players. Now we model this game as pure coordination game with the following context-dependent fixpoint selection functions:

\[ \varepsilon_i (q) = \{ x_i \in \{B, F\} \mid q(x_i) = x_i \} \]

and there are again two strategy profiles which are selection equilibria $x=(B,B)$ and $x=(F,F)$.

Again, to illustrate concepts, consider the wife and the strategy profile $x=(B,B)$. Her unilateral context is:

\[ \mathcal U^q_1 (\mathbf (B,B) ) (x_W) = \begin{cases}
(B,B) &\text{ if } x=B \\
(F,B) &\text{ if } x=F.
\end{cases} \]

Her selection function then yields: 
\[ \varepsilon_w (\mathcal U^q_1 (\mathbf (B,B) ) (x_w)) = \{ x_w \in \{B, F\} \mid \mathcal U^q_1 (\mathbf (B,B) ) (x_w) = x_w \}=B \]

As it holds that $x_w=B=\varepsilon_w (\mathcal U^q_1 (\mathbf (B,B) ) (x_w))$ the strategy profile $x=(B,B)$ is a selection equilibrium. 

In a similar vein, consider the strategy profile $x=(F,B)$. Then, given the unilateral context, the selection function yields:

Her selection function then again yields: 
\[ \varepsilon_w (\mathcal U^q_1 (\mathbf (F,B) ) (x_w)) = \{ x_w \in \{B, F\} \mid \mathcal U^q_1 (\mathbf (F,B) ) (x_w) = x_w \}=B \].

Thus, $x_w=B\neq\varepsilon_w (\mathcal U^q_1 (\mathbf (F,B) ) (x_w))$. And therefore, $x=(F,B)$ is not a selection equilbrium.

\end{example}

In the last example, the fixpoint selection functions do capture the same set of equilibria as do the argmax selection functions we dicussed before. This is no coincidence. We further explore this issue in section \ref{sec:compiling}. Moreover, we will comment extensively on the role of fixpoint selection functions as representing coordination goals in the next section.  
} % end ignore
% Jules - I also removed this last paragraph because the example no longer uses fixpoints

%---------------------------------------
\section{Examples}
%---------------------------------------
\label{sec:examples}

In this section we continue to work on Example \ref{keynesian-ex1} where three judges $J=\{J_1, J_2, J_3\}$ vote for two contestants $A$ and $B$. The set of possible outcomes is $X=\{A, B\}$ and the actual outcome is determined by the majority function $\maj : X \times X \times X \rightarrow X$. We have chosen to work with three judges and two possible outcomes in order to simplify the exposition, so that $\maj$ becomes a total function. 

In contrast to Example \ref{keynesian-ex1}, where the decisions of judges 2 and 3 were fixed, in the current section we analyse a  game where all three players make strategic decisions. We analyse several instances of this game with different motivations of players 
in order to illustrate the expressiveness of selection functions. Moreover, we explore the distinction between Nash equilibria and selection equilibria. Our specific examples contain implausible Nash equilibria which are not selection equilibria. Indeed, we have examples in which every strategy is a Nash equilibrium, but there are few selection equilibria.

%---------------------------------------
\subsection{Games with Context-independent Selection Functions}
%---------------------------------------

In a strategic game, the judges rank the contestants according to a preference ordering. For example, suppose judges 1 and 2 prefer $A$ and judge 3 prefers $B$. Thus for each judge we have an order relation on $X$. Suppose the order relation of the first judge is $B \preceq_1 A$, the second judge is $B \preceq_2 A$ and the third is $A \preceq_3 B$.

The judges now attempt to maximise the outcome with respect to their preferred ordering. Hence we obtain 3 different selection functions, which are maximisation with respect to each ordering:
\begin{align*}
\varepsilon_1 (p)&= \argmax_{x_1 \in (X, \preceq_1)} p(x_1) \\
\varepsilon_2 (p)&= \argmax_{x_2 \in (X, \preceq_2)} p(x_2) \\
\varepsilon_3 (p)&= \argmax_{x_3 \in (X, \preceq_3)} p(x_3).
\end{align*}
In this particular example (but not in general) we can fix a `global' order $B \preceq A$ and notice that $\preceq_3$ is the dual order. Thus we can for short refer to $\varepsilon_1$ and $\varepsilon_2$ as $\argmax$ and $\varepsilon_3$ as $\argmin$.

The game is represented in Table \ref{tab:classic-two}. Notice that Nash and selection equilibria coincide for this game, because it is a strategic game.

There is a subtle difference between this setup and the usual strategic game. In the classical approach, each judge's ordering would be seen as a preference relation. This would typically be used to derive payoffs \cite{OsbRub94}, which amounts to an order embedding of $X$ into $\mathbb R$. Here there are no payoffs: we directly maximise over the discrete order $X$.

\begin{table}[t!]
\caption{Agents: max, max, min}
\begin{center}
\begin{tabular}{|c|c||c|c||c|c|}\hline
Strategy 	& Outcome 	& Nash 	 	    & Defects       & Selection	    & Defects 	    \\ \hline \hline
$AAA$ 	    & $A$      	&  \checkmark	&		        & \checkmark	& 		        \\ \hline
$AAB$      	& $A$      	&  \checkmark 	&		        & \checkmark	& 		        \\ \hline
$ABA$ 		& $A$ 		&  -	        & $J_3$	        & -			    & $J_3$	        \\ \hline
$ABB$ 		& $B$		&  -       		& $J_2$	        & -			    & $J_2$	        \\ \hline
$BAA$ 		& $A$ 		&  -          	& $J_3$	        & -			    & $J_3$ 	    \\ \hline
$BAB$ 		& $B$ 		&  -    		& $J_1$	        & -			    & $J_1$ 	    \\ \hline
$BBA$ 		& $B$ 		&  -      		& $J_1$, $J_2$	& -			    & $J_1$, $J_2$ 	\\ \hline
$BBB$ 		& $B$ 		&  \checkmark	&		        & \checkmark    & 	        	\\ \hline
\end{tabular}
\end{center}
\label{tab:classic-two}
\end{table}

We now want to give the calculations of the Nash equilibria of Table \ref{tab:classic-two}
in the notation of selection functions and unilateral contexts. First we take a look at the Nash equilibrium $BBB$ with outcome $\maj(BBB)=B$ and give the rationale of player 1. The unilateral context of player 1 is
$$\mathcal U^{\maj}_1 (BBB)(x) = \maj(xBB) = B$$
meaning that in the given context the outcome is $B$ no matter what player 1 chooses to play. The minimisation quantifier applied to such unilateral context gives
$$\overline{\varepsilon_1}(\mathcal U^{\maj}_1 (BBB)) = \max_{\preceq_1}(\mathcal U^{\maj}_1 (BBB))=\{B\}$$
meaning that, in the given context, player 1's preferred outcome is $B$. Hence, we can conclude by $B = \maj(BBB)\in\overline{\varepsilon_1}(\mathcal U^{\maj}_1 (BBB)(x))=\{B\}$ 
that $B$ is a Nash equilibrium strategy for player 1. This condition holds for each player and allows us to conclude that $BBB$ is a Nash equilibrium. In a similar way way we see in
$$B = \maj(BBA)\notin\overline{\varepsilon_1}(\mathcal U^{\maj}_1 (BBA)(x)) = \{A\}$$
since $\mathcal U^{\maj}_1 (BBA)(x) = \maj(xBA) = x$, so that $BBA$ is not a Nash equilibrium. In other words, in the context $BBA$ player 1 has an incentive to change his strategy to $A$, so that the new outcome $\maj(ABA) = A$ is better than the previous outcome $B$.

%---------------------------------------
\subsection{Keynesian Beauty Contest}\label{subsec:Keynes}
%---------------------------------------

Let us now consider the Keynesian beauty contest as the paradigmatic example of a game with some players having context-dependent selection functions.
The first judge $J_1$ ranks the candidates according to a preference ordering $B \preceq A$. The second and third judges, however, are `Keynesian agents': they have no preference relations over the candidates per se, 
but want to vote for the winning candidate. 

The selection equilibria are precisely those in which $J_2$ and $J_3$ are coordinated, and $J_1$ is not pivotal in any of these. In the next section we will explain in more detail how the fixpoint selection function models coordination.
Table \ref{tab:fixfix} contains a summary of the equilibria.

Consider the strategy $AAA$, which is a selection equilibrium of this game. Suppose the moves of $J_1$ and $J_2$ are fixed, but $J_3$ may unilaterally change strategy. The unilateral context is
\[ \mathcal U^{\maj}_3 (AAA) (x) = \operatorname{maj} (AAx) = A \]
Thus the unilateral context is a constant function, and its set of fixpoints is
\[ \fix (\mathcal U^{\maj}_3 (AAA)) = \{ A \} \]
This tells us that $J_3$ has no incentive to unilaterally change to the strategy $B$, because he will no longer be voting for the winner.

On the other hand, for the strategy $ABB$ the two Keynes agents are indifferent, because if either of them 
unilaterally changes to $A$ then $A$ will become the majority and they will still be voting for the winner. This is still a selection equilibrium (as we would expect) because the unilateral context is the identity function, and in particular $B$ is a fixpoint.

There are two selection equilibria, $BAA$ and $BBB$, which are implausible in the sense that $J_1$ is not voting for his preferred candidate.  

\begin{table}[t!]
\caption{Agents: max, fix, fix}
\begin{center}
\begin{tabular}{|c|c||c|c||c|c|}\hline
Strategy 	& Outcome 	& Nash 	 	    & Defects	& Selection	        & Defects 		\\ \hline \hline
$AAA$ 	    & $A$      	&  \checkmark   &		    & \checkmark		& 				\\ \hline
$AAB$      	& $A$      	&  \checkmark   &		    & -					& $J_3$			\\ \hline
$ABA$ 		& $A$ 		&  \checkmark	&		    & -					& $J_2$			\\ \hline
$ABB$ 		& $B$ 		&  \checkmark   &		    & \checkmark		& 				\\ \hline
$BAA$ 		& $A$ 		&  \checkmark   &		    & \checkmark		& 	 			\\ \hline
$BAB$ 		& $B$ 		&  -  		    & $J_1$	    & -					& $J_1$, $J_2$ 	\\ \hline
$BBA$ 		& $B$ 		&  - 		    & $J_1$	    & -					& $J_1$, $J_3$ 	\\ \hline
$BBB$ 		& $B$ 		&  \checkmark 	&		    & \checkmark		& 				\\ \hline
\end{tabular}
\end{center}
\label{tab:fixfix}
\end{table}

We now calculate the Nash and the selection rationale for the strategy profile $AAB$ of the Keynesian player 3.
The outcome of $AAB$ is $\maj(AAB)=A$. The unilateral context of player 3 is 
\[ \mathcal U^{\maj}_3 (AAB)(x)=\operatorname{maj}(AAx)=A \]
meaning that the outcome is (still) $A$ if player 3 unilaterally changes from $B$ to $A$.
The minimisation quantifier applied to this context gives 
\[ \overline{\varepsilon_3}(\mathcal U^{\maj}_3 (AAB)) = \fix(\mathcal U^{\maj}_3 (AAB)) = \{A\} \]
meaning that $A$ is the outcome resulting from an optimal choice.
Hence, we can conclude by 
\[ A = \maj(AAB) \in \overline{\varepsilon_3}(\mathcal U^{\maj}_3 (AAB)) = \{A\} \]
that $B$ is a Nash equilibrium strategy for player 3.

The rationale for the selection equilibrium is as follows: the strategy 
$B \notin \varepsilon_3 (\mathcal U^{\maj}_3 (AAB)) = \fix(\mathcal U^{\maj}_3 (AAB)) = \{A\}$
meaning that $AAB$ is not a selection equilibrium.

%---------------------------------------
\subsection{Coordination Game}
%---------------------------------------

\begin{table}[t!]
\caption{Agents: fix, fix, fix}
\begin{center}
\begin{tabular}{|c|c||c|c||c|c|}\hline
Strategy 	& Outcome 	& Nash           	 	& Defects	& Selection     & Defects 	\\ \hline \hline
$AAA$ 	    & $A$  	    &  \checkmark    		&		    & \checkmark	& 	    	\\ \hline
$AAB$      	& $A$       &  \checkmark     		&		    & -				& $J_3$	    \\ \hline
$ABA$ 		& $A$ 		&  \checkmark      	    &		    & -				& $J_2$	    \\ \hline
$ABB$ 		& $B$ 		&  \checkmark       	&		    & -				& $J_1$	    \\ \hline
$BAA$ 		& $A$ 		&  \checkmark         	&		    & -				& $J_1$ 	\\ \hline
$BAB$ 		& $B$ 		&  \checkmark         	&		    & -				& $J_2$	    \\ \hline
$BBA$ 		& $B$ 		&  \checkmark         	&		    & -				& $J_3$ 	\\ \hline
$BBB$ 		& $B$ 		&  \checkmark         	&		    & \checkmark	& 	    	\\ \hline
\end{tabular}
\end{center}
\label{tab:coordination}
\end{table}

We consider a game where all agents act according to a fixpoint goal.
Judges $J_1, J_2$ and $J_3$ want to vote for the winner, so the selection functions are given by the fixpoint operator $(X \to X) \to \power{X}$.
%
%For this particular case we do not need to consider multi-valued fixpoint operators, since for in this game every unilateral context will have  fixed points always exist and are unique.
%
As can be seen in Table \ref{tab:coordination}, the selection equilibria are exactly the coordinated strategies. Note that the fixpoint selection function models coordination, and the game in which all selection functions are fixpoints is a coordination game. This gives us a new perspective on the Keynesian beauty contest as a one-sided coordination game: the Keynesian agent would like to coordinate with the group, whereas the agents of the group are not interested in coordination.

This game is a good example of why ordinary Nash equilibria are not suitable for modelling games with context-dependent quantifiers: it can be seen in the table that every strategy is a Nash equilibrium of this game, but the selection equilibrium captures our intuition perfectly that the equilibria should be the strategy profiles that are maximally coordinated, namely $AAA$ and $BBB$.

%---------------------------------------
\subsection{Anti-coordination Game}
%---------------------------------------

\begin{table}[t!]
\caption{Agents: non-fix, non-fix, non-fix}
\begin{center}
\begin{tabular}{|c|c||c|c||c|c|}\hline
Strategy 	& Winner 	& Nash           	 	& Defects	& Selection      	& Defects       	\\ \hline \hline
$AAA$ 	    & $A$  	    &  \checkmark    		&		    & -					& $J_1$, $J_2$, $J_3$ \\ \hline
$AAB$      	& $A$      	&  \checkmark    		&		    & \checkmark		& 					\\ \hline
$ABA$ 		& $A$ 		&  \checkmark	       	&		    & \checkmark		& 					\\ \hline
$ABB$ 		& $B$ 		&  \checkmark      		&		    & \checkmark		& 					\\ \hline
$BAA$ 		& $A$ 		&  \checkmark        	&		    & \checkmark		& 					\\ \hline
$BAB$ 		& $B$ 		&  \checkmark           &		    & \checkmark		& 					\\ \hline
$BBA$ 		& $B$ 		&  \checkmark        	&		    & \checkmark		&  					\\ \hline
$BBB$ 		& $B$ 		&  \checkmark        	&		    & -					& $J_1$, $J_2$, $J_3$ \\ \hline
\end{tabular}
\end{center}
\label{tab:anticoordination}
\end{table}

Just as the fixpoint selection function models coordination, so there is a `non-fixpoint' selection function which models \emph{anti-coordination} (or \emph{differentiation} as in the minority 
game \cite{arthur_inductive_1994,kets_minority_2007}). The set of non-fixpoints of a function $p : X \to X$ is
\[ \nonfix(p)= \{ x \in X \mid x \neq p(x) \} \]
Obviously this set might be empty, for instance when $p$ is the identity function. Hence, we extend this to a total selection function by specifying that the player is indifferent in the event that there are no non-fixpoints
\[ \varepsilon(p)= \begin{cases}
\nonfix(p) &\text{ if } \nonfix(p) \neq \varnothing \\
X &\text{ otherwise}
\end{cases} \]
Unlike for fixpoints, this selection function does not attain itself when considered as a quantifier. 
The corresponding quantifier is instead:
\[ \overline{\varepsilon}(p) = \{ p(x) \mid x \neq p(x) \} \]
In a game such as an election, an agent whose selection function is non-fix is a `punk' who aims to be in a minority. Therefore, let us consider the game in which all three judges are punks (see Table \ref{tab:anticoordination}). Of course only one can actually be in a minority, so the selection equilibria are precisely the `maximally anti-coordinated' strategy profiles, namely those in which one judge differs from the other two. This is another example of a game in which every strategy is a Nash equilibrium, but the selection equilibrium corresponds perfectly to our intuition.

%----------------------------------------
\section{Compiling Games}
%----------------------------------------
\label{sec:compiling}

In Section \ref{subsec:Keynes}, we modeled the Keynesian beauty contest. We showed that the set of Nash equilibria and the set of selection equilibria do not coincide. In Section \ref{sec:context_Nash} we proved that both equilibrium concepts are isomorphic when we only consider max and argmax as quantifier, respectively selection function. The natural question for the Keynesian beauty contest is: Is there a way to model the game using standard payoff functions and what is the set of equilibria we get?

For the reasons exposed in Section \ref{sec:context_dec}, there is no general utility representation of the fixpoint goals. As long as the outcomes are defined in terms of the candidate who wins the contest, it is not possible to consider an ordering of these outcomes alone but contextual information is also important.  

There are two things an analyst can do. First, he can redefine the outcome space and define new utility functions on these outcomes for the Keynes agents. This, however, would make it necessary to change the outcome function globally for all players. Secondly, for the given outcome function, he can find payoffs by hand that mimick an agent's fixpoint goals. That would mean to give up a compact and meaningful description of an agent via a utility function. Moreover, for each new game payoffs have to be calculated again.

\begin{table}[t!]
\caption{Agents: max, fix, fix}
\begin{center}
\begin{tabular}{|c|c||c|c||c|c|c|}\hline
Strategy 	& Outcome   & NE sim. S NE    & Defects   & $P_{J_1}$ & $P_{J_2}$	& $P_{J_3}$ 	\\ \hline \hline
$AAA$ 	    & $A$       & \checkmark     &           & 1			& 1			& 1				\\ \hline
$AAB$      	& $A$       & -     & $J_3$     & 1			& 1			& 0				\\ \hline
$ABA$ 		& $A$       & -     & $J_2$     & 1			& 0			& 1				\\ \hline
$ABB$		& $B$       & \checkmark     &           & 0			& 1			& 1				\\ \hline
$BAA$ 		& $A$       & \checkmark     &           & 1			& 1			& 1				\\ \hline
$BAB$ 		& $B$       & -     & $J_1,J_2$ & 0			& 0			& 1				\\ \hline
$BBA$ 		& $B$       & -     & $J_1,J_3$ & 0			& 1			& 0				\\ \hline
$BBB$ 		& $B$       & \checkmark    &           & 0			& 1			& 1				\\ \hline
\end{tabular}
\end{center}
\label{tab:compile}
\end{table}

Table \ref{tab:compile} illustrates this approach for the Keynesian beauty contest. The last three columns depict the computed payoff matrix. In column `NE sim. SNE' (`Nash equilibrium simulating selection equilibrium') we denote the Nash equilibria. Comparing it to Table \ref{tab:fixfix} shows that the Nash equilibria now simulate the selection equilibria. To attain the payoff functions for jugdes 2 and 3 (judge 1 prefers candidate $A$ over $B$), the analyst solves the game for their fixpoint goals. 

Consider the strategy profiles $AAA$ and $AAB$. These strategy profiles yield the same outcome: $A$ is the winner of the contest. In this situation judge 3 is not pivotal; independently of his action, $A$ will win the contest. Now, to achieve the context-dependency in this  game using payoff functions, the analyst has to introduce differences in the payoffs for the same final outcome. I.e., he provides different payoffs for the different actions leading to the same outcome and is thus implicitly using contextual information.

Contrast this with the selection function approach. Selection functions can not only be defined on the set of outcomes but on the whole function space of unilateral contexts. Thus, when considering the beauty contest, the outcome function is the same for an agent who is merely concerned with the final outcomes, such as judge 1 in our example, or who is concerned with the context, like the fixpoint judges 2 and 3.
 Selection functions do not convey implicitly the context-dependency but explicitly highlight it.

Fixpoint goals are but one example. In general, the feature that we can equip players with goals depending on the game description itself (described by the outcome function) makes the selection approach more expressive and more compositional because there is no need to change the global outcome function but only the local representation of one player.\footnote{Or, alternatively, there is no need to solve the game and encode context-dependent information by hand.} As a further consequence, this widens the possibilites to do comparative statics within the same game.

In the case an analyst wants to encode context-dependent information in a given game, there is a very simple procedure for calculating the payoffs such that the Nash equilibrium mimicks the selection one: we write $0$ whenever the player defects and $1$ whenever they do not defect (cf Table \ref{tab:compile}).

Besides convenience, there is another reason this is important: it gives a very simple way to talk about mixed strategies. A good example of this is the beauty context variant $(\max,\min,\fix)$. In this game $J_1$ prefers $A$, $J_2$ prefers $B$ and $J_3$ would like to vote for the winner. Intuitively we expect this game to have an equilibrium in which $J_1$ votes for $A$, $J_2$ votes for $B$ and $J_3$ mixes with arbitrary probability. By `compiling' the game to a classical utility function representation we regain these mixed equilibria. Although we are working on directly representing mixed strategies in the selection function framework, using this procedure is simple and effective. In particular, if we begin with a finite game and compile it, we obtain a strategic game to which Nash's theorem applies (whereas there is no reason to expect our original game to have any equilibria).

\ignore{
Since this procedure can be applied to any game, in particular we could apply it to a game that is already classical. In this case the translation preserves qualitative information about the equilibria, but quantitative information is destroyed. For example, Pareto dominance is not preserved by the translation.
}

\begin{theorem}
Consider a game $\mathcal G$ defined by total selection functions $\varepsilon_i : (X_i \to R) \to \mathcal P X_i$ and outcome function $q : \prod_i X_i \to R$. We define a strategic game $\mathcal G'$ to have the same move sets $X_i$, and the utility for the $i$th player of the strategy $\mathbf x : \prod_i X_i$ is defined by the outcome function
\[ q (\mathbf x)_i = \begin{cases}
1 &\text{ if } \mathbf x_i \in \varepsilon_i (\mathcal U^q_i \mathbf x) \\
0 &\text{ otherwise}
\end{cases} \]
Then
\begin{enumerate}
\item A strategy $\mathbf x$ of $\mathcal G'$ is a Nash equilibrum iff all players receive utility 1
\item The Nash equilibria of $\mathcal G'$ are exactly the selection equilibria of $\mathcal G$
\end{enumerate}
\end{theorem}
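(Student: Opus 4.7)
The plan is to reduce part (2) to part (1) and to prove (1) by exploiting totality of the selection functions together with the key observation that the unilateral context $\mathcal{U}^q_i$ does not depend on the $i$th coordinate of the strategy profile. Concretely, I would first record that for any profile $\mathbf{x}$ and any $x \in X_i$ one has $\mathcal{U}^q_i(\mathbf{x}[i \mapsto x]) = \mathcal{U}^q_i(\mathbf{x})$, since replacing the $i$th coordinate of $\mathbf{x}$ and then overwriting it again has the same effect as overwriting it once. Consequently the simulated payoff simplifies to
\[ q(\mathbf{x}[i \mapsto x])_i = 1 \iff x \in \varepsilon_i(\mathcal{U}^q_i \mathbf{x}), \]
i.e., player $i$'s utility in $\mathcal{G}'$ depends on their own deviation only through a \emph{fixed} unilateral context determined by the other players' moves.

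For part (1), the ``if'' direction is immediate: utilities in $\mathcal{G}'$ lie in $\{0,1\}$, so a profile in which every player scores $1$ cannot be strictly improved upon and is therefore a Nash equilibrium. For the ``only if'' direction, fix a player $i$ and a Nash equilibrium $\mathbf{x}$ of $\mathcal{G}'$. Since $\varepsilon_i$ is total, $\varepsilon_i(\mathcal{U}^q_i \mathbf{x})$ is nonempty; pick some $x^*$ in it, and the displayed equivalence above gives $q(\mathbf{x}[i \mapsto x^*])_i = 1$. By the Nash condition player $i$ cannot be strictly better off deviating, so $q(\mathbf{x})_i = 1$ as well, whence $\mathbf{x}_i \in \varepsilon_i(\mathcal{U}^q_i \mathbf{x})$.

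Part (2) then follows by unfolding definitions: a Nash equilibrium of $\mathcal{G}'$ is, by (1), precisely a profile satisfying $\mathbf{x}_i \in \varepsilon_i(\mathcal{U}^q_i \mathbf{x})$ for every $i$, which is exactly the definition of a selection equilibrium of $\mathcal{G}$. The only conceptually delicate step is the invariance $\mathcal{U}^q_i(\mathbf{x}[i \mapsto x]) = \mathcal{U}^q_i(\mathbf{x})$ combined with the totality of $\varepsilon_i$; without totality one could have equilibria of $\mathcal{G}'$ where every player is ``stuck'' at utility $0$ because no better move exists, and the equivalence with selection equilibria would fail.
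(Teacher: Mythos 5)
Your proposal is correct and follows essentially the same route as the paper's own proof: both hinge on the invariance $\mathcal U^q_i(\mathbf x[i\mapsto x]) = \mathcal U^q_i(\mathbf x)$ together with totality of $\varepsilon_i$, reduce part (2) to part (1) by unfolding the definition of the simulated payoff, and handle the nontrivial direction of (1) by picking some $x^*\in\varepsilon_i(\mathcal U^q_i\mathbf x)$ as a profitable deviation. The only cosmetic difference is that you argue the ``only if'' direction directly while the paper argues its contrapositive.
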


\begin{proof}
Notice that (2) follows immediately from (1), since by the construction of $q'$ a strategy $\mathbf x$ is a selection equilibria of $\mathcal G$ iff all players receive utility 1 in $\mathcal G'$. To prove (1) we first note that if all players receive 1 utility from a strategy $\mathbf x$ then trivially it is a Nash equilibrium, since getting more than 1 utility is impossible. Now suppose we have a strategy $\mathbf x$ where some player (say the $i$th) receives utility 0, and we will prove that $\mathbf x$ it is not a Nash equilibrium of $\mathcal G'$. Let $p$ be the unilateral context $p = \mathcal U^q_i (\mathbf x)$. Since a selection function can never return the empty set, choose some $x' \in \varepsilon_i (p)$. In particular, $x' \neq \mathbf x_i$ since otherwise we would have $q' (\mathbf x)_i = 1$. Let $\mathbf x' = \mathbf x [i \mapsto x']$ be the strategy in which player $i$ unilaterally changed from $\mathbf x_i$ to $x'$, and let $p' = \mathcal U^q_i (\mathbf x')$ be its unilateral context. Since $\mathcal U^q_i (\mathbf x)$ is independent of the value of $\mathbf x_i$, we have an equality of contexts $p = p'$, and so in particular $\varepsilon_i (p) = \varepsilon_i (p')$. Therefore $x' \in \varepsilon_i (p')$, so $q' (\mathbf x')_i = 1$, which proves that player $i$ can increase his utility from 0 to 1 by unilaterally changing from $\mathbf x$ to $x'$.
\end{proof}

%---------------------------------------
\section{Conclusions}
%---------------------------------------
\label{sec:conclusion}

In this paper we propose the use of quantifiers and selection functions to model individual choices and games. Our framework instantiates preferences and utility functions as a special case but we can also model agents whose preferences are incomplete, who deviate from maximization,  whose motives are not only influenced by the outcome but also by the way outcomes realize.  

In practice, economists often restrict themselves to model decisions or interactions using utility functions. We think this practice has two main negative side-effects our framework helps to overcome.

First, sticking to utility functions excludes interesting phenomena from analysis. The descriptions and explanations of behavior exclusively live in the framework of full optimization (or equivalently in having rational preferences). Yet, empirical evidence suggests that people's behavior deviates from this benchmark. Our framework contributes by introducing a whole new set of alternatives to describe behavior. 

Secondly, utility functions and preferences are but one formal encoding of economic situations. Even if a formal equivalence between utility maximization and selection functions exists, the naturalness of representing the economic problem via selection functions may be different and more insightful. Selection functions allow a high-level and a more abstract description than utility functions. We illustrated this focusing on one example: fixpoint selection functions as a high-level representation of coordination goals.  
The abstraction has a further crucial advantage: It introduces compositionality. Whereas with utility functions it can be necessary to either change the outcome space or change payoffs for a given outcome function by hand in order to model behavior, with selection functions the global outcome function remains unchanged and only the local selection functions have to be changed. 

More generally, our modelling perspective on the different levels of abstraction is a well known pattern studied extensively in computer science.
Any mathematical or logical language for knowledge representation \cite{harmelen_handbook_2007} faces a trade off between the goals of representation and reasoning \cite{brachman_knowledge_2004}, page 327:
"[$\hdots$]  why do we not attempt to define a formal knowledge representation language 
that is coextensive with a natural language like English? [$\hdots$] 
Although such a highly expressive language would certainly be desirable 
from a \emph{representation} standpoint, 
it leads to serious difficulties from a \emph{reasoning} standpoint."

The difference between using utility functions and selection functions 
is that both differ in their representational power. 
The natural language description of an economic situation 
can be more directly translated into the high level formal system of selection functions 
as opposed to the low level classical approach.
We have depicted both in Figure \ref{fig:model}.
In our approach, the payoff matrix as in Table \ref{tab:compile}, can be automatically computed, 
and the modeller needs only to decide which selection functions represent the agents and 
which outcome function represents the situation. 
The classical approach is depicted in Figure \ref{fig:model} by a translation into the payoff matrices.

Our approach of context-dependent modeling of the left part of Figure \ref{fig:model}
can thus be seen as a modelling technique to reduce the gap between the high level
description of an economic situation in natural language and the formal modelling language.
Selection functions are mid-level in between the high level natural language and the 
low level language of utility maximization represented traditionally in calculus in classical games.

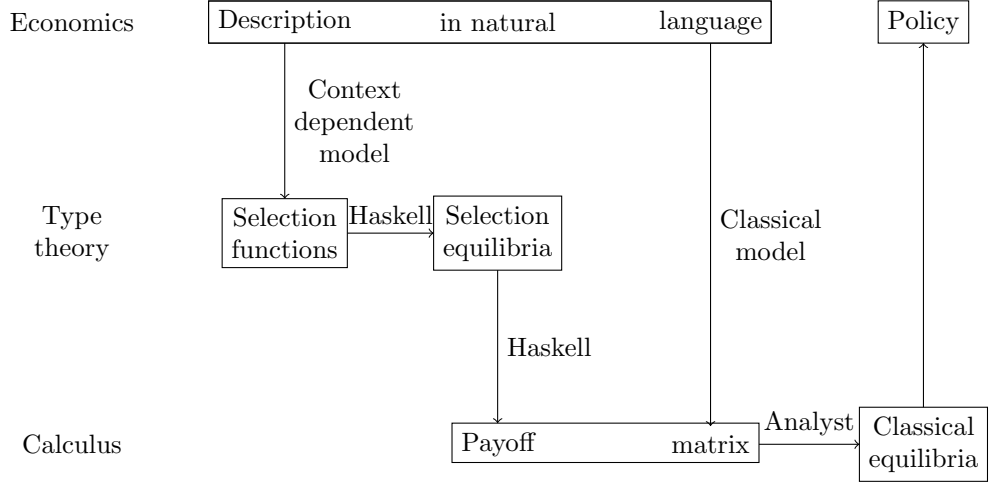
\begin{figure}[t!]
\usetikzlibrary{calc}
\begin{center}
\caption{Context depended and classical modelling}
\footnotesize
\begin{tikzpicture}[node distance=2.8cm, auto]
\node (A) {Economics};
\node (B) [below of=A] [align=center] {Type\\theory};

\node (C) [below of=B] {Calculus};

\node (D) [right of=A] {Description};
\node (E) [right of=D] {in natural};
\node (J) [right of=E] {language};
\draw ($(D.north west)$) rectangle ($(J.south east)$);

\node (F) [below of=D] [align=center] {Selection\\functions};
\draw ($(F.north west)$) rectangle ($(F.south east)$);

\node (G) [right of=F] [align=center] {Selection\\equilibria};
\draw ($(G.north west)$) rectangle ($(G.south east)$);

\node (H) [below of=G] {Payoff};
\node (I) [right of=H] {matrix};
\draw ($(H.north west)$) rectangle ($(I.south east)$);

\node (K) [right of=I] [align=center] {Classical\\equilibria};
\draw ($(K.north west)$) rectangle ($(K.south east)$);

\node (L) [right of=J] {Policy};
\draw ($(L.north west)$) rectangle ($(L.south east)$);

\draw ($(D.north west)$) rectangle ($(J.south east)$);

\draw [->] (J) to node [align=center] {Classical\\model} (I);
\draw [->] (D) to node [align=center] {Context\\dependent\\model} (F);
\draw [->] (F) to node {Haskell} (G);
\draw [->] (G) to node {Haskell} (H);
\draw [->] (I) to node {Analyst} (K);
\draw [->] (K) to node {} (L);
\end{tikzpicture}
\label{fig:model}
\normalsize
\end{center}
\end{figure}

An account for the representational power of languages is an involved research topic
and even more raising the expressivity while not sacrificing reasonability.
However, our hypothesis is that this higher-order approach increases the expressivity of the representation language for game theory while not sacrificing reasoning possibilities.

In fact, our framework is ready for automated reasoning, as opposed to calculus \cite{pavlovic_calculus_2002}. Our approach is directly programmable in modern functional languages like Haskell that has been developed within the high level modern mathematical type theory \cite{pierce2002types} in order to increase the expressivity of imperative languages such as Fortran or Matlab.  The essence of functional languages is the usage of higher order functions that take and output other functions like selection functions, quantifier and outcome functions. Functional programming languages can be understood as languages to design languages. A typical approach to programing via functional languages is to design a domain specific language that allows to express the problems in a most direct and natural way while the program is compiled with its problem declaration into the low level solution steps of an intermediate imperative or a very low level machine language. This is much in the spirit of our approach to compile the selection function model into the payoff matrix of a classical representation of the Keynesian beauty contest game in Table \ref{tab:compile}.

Regarding automation, we have heavily taken advantage of a prototype Haskell tool that calculates equilibria by brute force (enumeration of all strategies) for the games we have discussed in this paper. In fact the discovery of the notion of selection equilibrium 
has been a direct consequence of using the tool.
Before using the software we were misguided by our intuition,
and did not recognise the difference between Nash equilibria and selection equilibria.

There are several avenues for future research. In this paper, we introduced a framework, so clearly the overall usefulness of our approach will depend on providing interesting applications. Secondly, we will extend the theory of selection functions to other classes of games, such as repeated and sequential games, as well as games of incomplete information. 
Lastly, we see a huge potential in using functional languages to model games, in particular large and complex games. We are currently working on a prototpye Haskell tool that allows us to do exactly that. Moreover we are working on the use of monads (a very powerful paradigm for programming with higher order functions, implemented with great success in Haskell) to model computational effects in economics such as non-determinism, stochastic choice, memory and global state. 

%---------------------------------------
\bibliographystyle{plain}
\bibliography{references_2}

%\begin{framed}
%Just to be clear, this (intentionally) isn't quite right, there are issues here that we shouldn't go into. I'm mixing up typed and untyped $\lambda$ calculus, for a start. Combinatory completeness is a result about untyped $\lambda$ calculus, but set theory is inconsistent as a model of untyped $\lambda$ calculus (of course, you can't have a fixed point combinator in set theory!). This isn't a problem, in practice we either lose the ability to represent every computable function, or we use a foundation different to set theory.
%\end{framed}

%\newpage
%\tableofcontents

\end{document}